\documentclass[11pt]{article}
\usepackage[margin=1in]{geometry}
\usepackage[round,authoryear]{natbib}
\usepackage[utf8]{inputenc}
\usepackage[T1]{fontenc}
\usepackage[hidelinks,hypertexnames=false]{hyperref}
\usepackage{url}
\usepackage{booktabs}
\usepackage{amsfonts}
\usepackage{nicefrac}
\usepackage{microtype}
\usepackage{xcolor}
\usepackage{amsmath,amssymb,amsthm,mathtools}
\usepackage{algorithm}
\usepackage{algpseudocode}
\usepackage{threeparttable}
\usepackage{tabularx}
\usepackage{graphicx}
\graphicspath{{figures/}}
\usepackage{enumitem}
\usepackage{placeins}

\title{TS--Neyman: 
Posterior Sampling for Adaptive Stratified Estimation}
\author{%
Kosuke Morikawa, Mst Moushumi Pervin, and Jae Kwang Kim\\
Department of Statistics, Iowa State University
}
\date{}

\numberwithin{equation}{section}

\newtheorem{assumption}{Assumption}[section]
\newtheorem{proposition}[assumption]{Proposition}
\newtheorem{lemma}[assumption]{Lemma}
\newtheorem{theorem}[assumption]{Theorem}
\newtheorem{corollary}[assumption]{Corollary}
\newtheorem{remark}[assumption]{Remark}

\newcommand{\E}{\mathbb{E}}
\newcommand{\Prob}{\mathbb{P}}
\newcommand{\Var}{\operatorname{Var}}
\newcommand{\argmax}{\operatorname*{argmax}}
\newcommand{\1}{\mathbf{1}}
\newcommand{\wt}{\widetilde}
\newcommand{\wh}{\widehat}
\newcommand{\toas}{\xrightarrow{\mathrm{a.s.}}}
\newcommand{\rightsquig}{\rightsquigarrow}

\begin{document}

\maketitle

\begin{abstract}
Many model evaluation tasks reduce to estimating an average loss, error rate, or subgroup metric on a stratified pool when each label, human rating, or simulator call is costly. The precision-optimal Neyman allocation depends on within-stratum variances, which must be learned from the same observations used for estimation. We formulate this as a sequential allocation problem and use the exact one-step marginal variance reduction as the priority index. Replacing the unknown variances by independent inverse-$\chi^2$ posterior draws yields TS--Neyman, a Thompson-sampling rule that preserves the oracle marginal-gain structure while randomizing over variance uncertainty. For any fixed finite number of strata, we prove almost-sure convergence of the TS--Neyman allocation proportions to the Neyman target, asymptotic optimality of the variance proxy, and a central limit theorem for the resulting adaptive stratified estimator. In two five-stratum budget-scaling benchmarks, one bounded-loss and one binary model-error in the spirit of \citet{DaiGraduHarshaw2023ClipOGD}, TS-Neyman's relative efficiency stays within 5\% of the oracle on the bounded-loss population and within about 15\% on the binary benchmark.
 In an additional CivilComments real-data replay with confidence-based strata, it stays within about 8\% of the oracle and improves on equal allocation by roughly 7--14\% in MSE across budgets, while plug-in greedy and two-stage plug-in can degrade by over an order of magnitude under sparse pilots. Common-pilot warm-start and prior-sensitivity studies show this behavior is stable under working-model and working-prior misspecification.
\end{abstract}

\section{Introduction}
\label{sec:intro}

Many problems in machine learning require estimating a population-level performance quantity under a limited data-acquisition budget. A central example is budgeted model evaluation: a practitioner has a large pool of inputs, knows each input's class, domain, subgroup membership, or model-confidence bin, but must pay for labels, human ratings, expert review, or expensive simulator calls before observing the loss or error on that input. In this setting the pool is naturally stratified, and the central design question is not only how many evaluation labels to acquire, but \emph{where} to acquire them.

We use stratified-sampling notation because it gives an exact variance objective, but the same mathematical problem appears in ML evaluation and auditing. Each unit can be an input example, each stratum can be a class, subgroup, domain, or confidence bin, and the observed value $y_{hi}$ can be a loss, a binary error indicator, a calibration score, or a human-evaluation score. The objective is final estimator precision for an average performance metric, not cumulative reward during data collection.

When the objective is estimation rather than reward maximization, the optimal allocation depends on the heterogeneity of the strata. In classical stratified sampling, \emph{Neyman allocation} assigns samples in proportion to $N_h S_h$, where $N_h$ is the size of stratum $h$ and $S_h$ is the within-stratum standard deviation 
\citep{neyman1934two,cochran1977sampling}. This allocation minimizes the variance of the stratified mean estimator under a fixed budget, and design-based estimators such as the Horvitz--Thompson estimator provide the standard framework for finite-population inference under unequal sampling designs \citep{horvitz1952generalization}. However, while stratum sizes are typically known from a sampling frame, the within-stratum variances must usually be learned from the same observations used for estimation.

This creates a sequential decision problem. After $n_h$ samples are drawn from stratum $h$, the allocation-sensitive part of the stratified estimator's variance is proportional to a weighted sum of the terms $N_h^2 S_h^2 / n_h$, and adding one more observation to stratum $h$ reduces this term by $N_h^2 S_h^2 / [n_h(n_h+1)]$. This is a simple oracle priority index for sequential allocation, but $S_h^2$ is unknown and must be learned online. A purely plug-in rule that greedily samples the stratum with the largest \emph{estimated} marginal reduction can therefore overcommit to strata whose initial sample variances are spuriously large; conversely, excessive exploration of apparently low-variance strata wastes budget. The resulting exploration--exploitation trade-off is related to bandit and active-sampling problems, but the objective is different: minimize the final error of an estimator, not maximize cumulative reward \citep{EtoreJourdain2010Adaptive,carpentier2015adaptive}. 

We propose \textsc{TS--Neyman}, a Thompson sampling algorithm for adaptive Neyman allocation. The algorithm maintains a working posterior over each stratum variance. At each round, it draws a plausible variance from each posterior, computes the corresponding sampled marginal variance reduction, and allocates the next observation to the stratum with the largest sampled reduction. \textsc{TS--Neyman} is thus greedy with respect to a posterior-sampled world rather than a single plug-in estimate, favoring strata that appear important for variance reduction while continuing to explore those whose variances remain uncertain. Our use of posterior sampling follows the Thompson sampling principle for sequential decision-making under uncertainty \citep{thompson1933likelihood,russo2018tutorial}.

\paragraph{Related work.} Adaptive variance-based allocation has been studied in stratified Monte Carlo \citep{EtoreJourdain2010Adaptive,carpentier2015adaptive}, in survey sampling and multi-wave allocation \citep{salehi2010efficient,yang2025optimall}, and in responsive and adaptive survey design \citep{groves2006responsive,schouten2013optimizing}. The closest paper on adaptive Neyman allocation is CLIP-OGD \citep{DaiGraduHarshaw2023ClipOGD}, which studies a two-arm design-based potential-outcomes problem in which the learner chooses a treatment probability and uses a clipped online projected-gradient method to obtain sublinear Neyman regret. Our setting differs in four ways: we study general $H$-stratum mean estimation rather than two-arm treatment-effect estimation, the action at each round is a discrete stratum choice rather than a treatment probability, our main experiments use without-replacement stratified sampling from a fixed finite population, and the proposed algorithm is a Thompson-sampling rule over unknown stratum variances rather than an online-gradient method. A more detailed comparison appears in Section~~\ref{sec:relatedwork}.  

\paragraph{Contributions.}
\begin{itemize}
\item \textbf{Formulation.} We cast adaptive Neyman allocation as a sequential decision problem for stratified model evaluation and finite-population mean estimation under unknown within-stratum variances, with final estimator precision as the objective rather than cumulative reward.
\item \textbf{Algorithm.} We derive the exact one-step marginal variance-reduction index for sequential Neyman allocation and use it to construct \textsc{TS--Neyman}, a Thompson-sampling rule that incorporates posterior uncertainty in the unknown stratum variances and mitigates the premature-commitment behavior of plug-in greedy allocation.
\item \textbf{Theory.} For any fixed finite $H$, we prove that \textsc{TS--Neyman} learns the Neyman allocation proportions almost surely, that the resulting variance proxy is asymptotically optimal, and that the adaptive stratified estimator is consistent and asymptotically normal. The proof combines posterior concentration, a weighted chi-square Lyapunov function on the allocation simplex with a Robbins--Siegmund argument, and a martingale central limit theorem under the adaptive filtration.

\item \textbf{Experiments.} We benchmark against oracle, equal-allocation, two-stage plug-in, plug-in greedy, and an $H$-stratum projected-gradient comparator inspired by CLIP-OGD, on two controlled five-stratum populations chosen to expose sparse-pilot failure modes and on a CivilComments real-data replay with confidence-quintile strata.

\end{itemize}

The remainder of the paper is organized as follows. Section~\ref{sec:setup} introduces the stratified model-evaluation  setup and the oracle Neyman allocation problem. Section~\ref{sec:algorithm} presents \textsc{TS-Neyman}  together with the H-CLIP-OGD empirical baseline and a detailed relation to prior work. Section~\ref{sec:theory} studies the theoretical properties of \textsc{TS-Neyman} for fixed finite $H$. Section~\ref{sec:simulation} describes the numerical experiments and reproducibility details.

\section{Basic Setup}
\label{sec:setup}

The notation below is standard in stratified sampling, but throughout the
paper the finite population may be read as a stratified ML evaluation pool.
For example, $y_{hi}$ may be the loss or binary error of a fixed model on
example $i$ in stratum $h$, and sampling a unit means acquiring the label or
assessment needed to reveal that value.

Consider a finite population partitioned into $H$ strata with known sizes
$N_1, \ldots, N_H$ and total size $N = \sum_{h=1}^{H} N_h$. For stratum $h$,
let $\bar{Y}_h$ and $S_h^2$ denote the finite-population mean and variance,
and write $w_h := N_h S_h$ for the Neyman weight. If $n_h$ units are sampled
by simple random sampling without replacement within stratum $h$, the
stratified Horvitz--Thompson estimator of the population mean is
\begin{equation}
  \widehat{\mu} = \frac{1}{N} \sum_{h=1}^{H} \frac{N_h}{n_h}
  \sum_{i \in s_h} y_{hi},
  \qquad \sum_{h=1}^{H} n_h = n.
  \label{eq:ht}
\end{equation}


Its design variance is
\begin{equation}
  \mathrm{Var}(\widehat{\mu})
  = \frac{1}{N^2} \sum_{h=1}^{H} \frac{N_h^2}{n_h}
    \Big(1 - \frac{n_h}{N_h}\Big) S_h^2
  = \frac{1}{N^2} \sum_{h=1}^{H} \frac{w_h^2}{n_h}
    - \frac{1}{N^2} \sum_{h=1}^{H} N_h S_h^2.
  \label{eq:design-var}
\end{equation}
The second term in \eqref{eq:design-var} is allocation-independent, so the
exact finite-population design problem is equivalent to minimizing
\begin{equation}
  V(n; S) := \sum_{h=1}^{H} \frac{w_h^2}{n_h}
\label{eq:proxy}  
\end{equation}
subject to  $n_h \geq 1$ and 
$\sum_{h=1}^{H} n_h = n$. 
  The lower bound $n_h \geq 1$ matches the apportionment convention that every
item receives at least one allocation; under the $m_0$-pilot initialization
of Section~\ref{sec:algorithm} it is automatic. The corresponding continuous
optimum is the classical Neyman allocation
\begin{equation}
  p_h^\star = \frac{w_h}{\sum_{g=1}^{H} w_g},
  \qquad n_h^\star \propto w_h,
  \label{eq:neyman}
\end{equation}
going back to \citet{neyman1934two}; see also \citet{cochran1977sampling}.


The oracle allocation \eqref{eq:neyman} assumes known within-stratum standard
deviations. In practice these are learned from the same samples used for
estimation, which turns allocation into a sequential decision problem. 


For sequential design it is useful to write the exact one-step improvement.
If stratum $h$ currently has $n_h$ observations, then allocating one
additional sample there changes \eqref{eq:proxy} by
\begin{equation}
  \Delta_h(n_h; S_h^2)
  = \frac{w_h^2}{n_h} - \frac{w_h^2}{n_h + 1}
  = \frac{w_h^2}{n_h(n_h + 1)}.
  \label{eq:marginal-gain}
\end{equation}

A brief algebraic check shows that greedy assignment by $\Delta_h$ is not just locally improving but globally optimal. Using the telescoping identity $1/n_h = 1 - \sum_{k=1}^{n_h-1} 1/[k(k+1)]$,
\[
V(n; S) \;=\; \sum_{h=1}^H \frac{w_h^2}{n_h} \;=\; \sum_{h=1}^H w_h^2 \;-\; \sum_{h=1}^H \sum_{k=1}^{n_h - 1} \frac{w_h^2}{k(k+1)}.
\]
The first term is constant in the allocation, so minimizing $V(n; S)$ over integer allocations with $\sum_h n_h = t$ and $n_h \geq 1$ is equivalent to selecting the largest $t - H$ values from the family $\{w_h^2 / [k(k+1)] : h \in [H],\, k \geq 1\}$. At the moment stratum $h$ already holds $n_h = k$ observations, $\Delta_h(n_h; S_h^2) = w_h^2 / [k(k+1)]$, so greedy assignment by $\Delta_h$ picks exactly these largest values in order. The one-step rule \eqref{eq:marginal-gain} is therefore globally optimal at every cumulative budget --- a property the algorithm in Section~\ref{sec:algorithm} will inherit after replacing the unknown $S_h^2$ with posterior draws of the superpopulation analogue.

\section{Proposed Algorithm}
\label{sec:algorithm}

In Section~\ref{sec:setup}, the exact marginal-gain rule was written in terms of unknown within-stratum variability. To turn that rule into an adaptive algorithm, we use a Bayesian working model for the stratum variances and randomize through posterior sampling. We emphasize at the outset that this posterior is an \emph{algorithmic} working model rather than a substantive Bayesian assumption on the true data-generating process: the proofs in Section~\ref{sec:theory} use only the explicit inverse-$\chi^2$ representation of the variance draws together with finite second moments of the true within-stratum observations.

For the asymptotic analysis we adopt an i.i.d.\ within-stratum superpopulation view in which stratum $h$ yields an infinite stream $Y_{h1}, Y_{h2}, \ldots$ with mean $\mu_h$ and variance $\sigma_h^2$, where $\sigma_h^2$ is the superpopulation analogue of the finite-population $S_h^2$. The Neyman weight is now $w_h = N_h \sigma_h$, with target proportions $p_h = w_h / \sum_g w_g$. After observing data $\mathcal{D}_h(t) = \{Y_{h1}, \ldots, Y_{h, n_h(t)}\}$ in stratum $h$, we use the working model
\begin{equation}
Y_{hi} \mid \mu_h, \sigma_h^2 \sim \mathcal{N}(\mu_h, \sigma_h^2), \qquad \sigma_h^2 \sim \mathrm{Inv}\text{-}\chi^2(\nu_0, s_0^2), \qquad p(\mu_h) \propto 1.
\label{eq:prior}
\end{equation}
With $s_h^2(t)$ denoting the sample variance in stratum $h$, the induced working posterior is
\begin{equation}
\sigma_h^2 \mid \mathcal{D}_h(t) \;\sim\; \mathrm{Inv}\text{-}\chi^2\!\left( \nu_0 + n_h(t) - 1, \; \frac{\nu_0 s_0^2 + (n_h(t) - 1) s_h^2(t)}{\nu_0 + n_h(t) - 1} \right).
\label{eq:posterior}
\end{equation}

\subsection{Thompson sampling for adaptive Neyman allocation}
\label{sec:tsneyman}

Based on the working posterior in \eqref{eq:posterior}, \textsc{TS--Neyman} evaluates the same marginal-gain structure as the oracle index of Section~\ref{sec:setup}, but with randomized variance draws. At step $t+1$, \textsc{TS--Neyman} draws $\widetilde\sigma_h^2(t)$ from \eqref{eq:posterior} and chooses the stratum with the largest sampled marginal gain
\begin{equation}
\widetilde\Delta_h(t) \;=\; \frac{N_h^2 \, \widetilde\sigma_h^2(t)}{n_h(t)\,\{n_h(t)+1\}}.
\label{eq:tsindex}
\end{equation}
The full procedure is summarized in Algorithm~\ref{alg:ts-neyman}.  
Relative to a plug-in greedy rule, \textsc{TS--Neyman} preserves uncertainty about each stratum's variance, which induces exploration automatically when a stratum has been sampled only a few times. The finite-population feasible set in Algorithm~\ref{alg:ts-neyman} is included only to avoid impossible allocations when a stratum is exhausted; the experiments choose budgets for which this cap is not binding for the main comparisons.

\begin{remark}[Computation and unequal costs]\label{rem:cost}
Each round requires $H$ posterior draws and $H$ index evaluations, so the per-round cost is $\mathcal{O}(H)$. If one sample from stratum $h$ costs $c_h > 0$, the marginal-gain calculation in \eqref{eq:tsindex} is replaced by $N_h^2 \, \widetilde\sigma_h^2(t) / [c_h \, n_h(t)\{n_h(t)+1\}]$, whose continuous target is $n_h^\star \propto N_h \sigma_h / \sqrt{c_h}$.
\end{remark}
\begin{remark}[Connection to apportionment]\label{rem:apportionment}
The deterministic marginal-gain rule whose optimality was verified at the end of Section~\ref{sec:setup} is, with $w_h = N_h \sigma_h$, exactly the Huntington--Hill apportionment of $t$ items among $H$ classes with weights $w_h$ \citep{huntington1928apportionment,balinski1982fair}. \textsc{TS--Neyman} can therefore be read as a posterior-sampling lift of Huntington--Hill apportionment to the unknown-variance setting, where at each step $w_h^2 = N_h^2 \sigma_h^2$ is replaced by an independent draw $N_h^2 \widetilde\sigma_h^2(t)$ from the working posterior.
\end{remark}

\begin{algorithm}[t]
\caption{TS-Neyman}
\label{alg:ts-neyman}
\begin{algorithmic}[1]
\Require Stratum sizes $\{N_h\}_{h=1}^H$, total budget $n$, initial sample size $m_0\ge 2$ per stratum, hyperparameters $(\nu_0,s_0^2)$.
\Require In finite-population runs, feasible set $\mathcal E_t=\{h:n_h(t)<N_h\}$; in the superpopulation theory, $\mathcal E_t=\{1,\dots,H\}$.
\For{$h=1,\dots,H$}
    \State Collect $m_0$ observations from stratum $h$ and initialize $n_h\gets m_0$.
\EndFor
\For{$t=Hm_0,\dots,n-1$}
    \State Set $\mathcal E_t=\{h:n_h(t)<N_h\}$ in finite-population runs and $\mathcal E_t=\{1,\dots,H\}$ in the superpopulation theory.
    \For{$h\in\mathcal E_t$}
        \State Draw $\wt\sigma_h^2(t)$ from \eqref{eq:posterior}.
        \State Compute $\wt\Delta_h(t)=N_h^2\wt\sigma_h^2(t)/[n_h(t)\{n_h(t)+1\}]$.
    \EndFor
    \State Choose $A_{t+1}\in\argmax_{h\in\mathcal E_t} \wt\Delta_h(t)$, breaking ties uniformly at random.
    \State Sample one additional unit from stratum $A_{t+1}$ and update that stratum's posterior.
\EndFor
\State \Return Final allocation $(n_1,\dots,n_H)$ and the stratified sample.
\end{algorithmic}
\end{algorithm}

\subsection{An H-stratum projected-gradient baseline}
\label{sec:hclipogd}

To strengthen the empirical comparison to the closest online-allocation literature, we also implement $H$-CLIP-OGD, a projected-gradient baseline inspired by CLIP-OGD~\citep{DaiGraduHarshaw2023ClipOGD}. Whereas CLIP-OGD is a two-arm sequential-experiment method whose decision variable is a single treatment probability, our setting requires a simplex-valued design variable on $\Delta_H$. We use a continuous proxy of the Neyman objective, take a plug-in gradient step on $\Delta_H$, and apply an $\varepsilon_t$-mixture parametrization to guarantee persistent exploration. The full construction (objective, gradient, step size, projection, and clipping schedule) is given in Appendix~\ref{app:hclipogd}. We use $H$-CLIP-OGD purely as an empirical comparator: no regret theorem is claimed, and all formal results in Section~\ref{sec:theory} concern \textsc{TS--Neyman} alone.

\subsection{Relation to prior work}
\label{sec:relatedwork}

Our work sits between three strands of literature, and is complementary rather than competing with each.

\textbf{Exact known-variance Neyman allocation.} Classical Neyman allocation \citep{neyman1934two,cochran1977sampling} has dynamic-programming and exact-algorithmic treatments 
\citep{Kadane2005OptimalDynamic,Wright2017ExactOptimal,Wright2020GeneralExact}, all of which assume known within-stratum variances. We treat the unknown-variance regime, where the marginal-gain index must be evaluated under uncertainty.

\textbf{Adaptive and Bayesian variance-based allocation.} Adaptive variance-based allocation has appeared in stratified Monte Carlo \citep{EtoreJourdain2010Adaptive,carpentier2015adaptive}, Bayesian finite-population design \citep{Zacks1970BayesianDesign}, stratified evaluation \citep{YuZhaiSra2019NearOptimal}, and two-stage or batch adaptive experimentation \citep{HahnHiranoKarlan2011,BlackwellPashleyValentino2023,Zhao2023AdaptiveNeyman}. Most of these rules are either non-randomized (two-stage plug-in, deterministic schedules) or operate at the batch level rather than one-by-one. \textsc{TS--Neyman} instead preserves the \emph{exact} one-by-one marginal-gain structure of the stratified-mean variance objective and randomizes through posterior sampling.

\textbf{CLIP-OGD and online-gradient methods.} The most directly comparable paper is CLIP-OGD \citep{DaiGraduHarshaw2023ClipOGD}, which studies adaptive Neyman allocation in two-arm sequential experiments under a design-based potential-outcomes framework, with treatment probability as the decision variable and a clipped online projected-gradient method that achieves sublinear expected Neyman regret. Our setting is genuinely different: $H$-stratum finite-population mean estimation rather than two-arm treatment-effect estimation, a discrete stratum choice rather than a continuous treatment probability, and a Thompson-sampling rule built from posterior draws of the unknown variances rather than an online-gradient method. Our contribution is therefore complementary to both the exact-allocation and online-gradient literatures: we develop a posterior-sampling rule for the unknown-variance case, prove fixed-$H$ allocation consistency (Theorem~\ref{thm:consistency}) and an estimator CLT (Corollary~\ref{cor:clt}), and instantiate $H$-CLIP-OGD as a fair empirical comparator on the $H$-simplex.

\section{Theoretical Properties}
\label{sec:theory}

We now give the asymptotic theory for TS--Neyman with any fixed finite number of strata.
The proof is written for the flat $H$-stratum algorithm in Algorithm~\ref{alg:ts-neyman}, not for a merged two-stratum surrogate.
The key technical point is that, once the posterior variance draws have concentrated, the sampled marginal-gain rule selects strata with the smallest relative allocation ratio $x_h(t)/p_h$, where $x_h(t)=n_h(t)/t$ and $p_h$ is the Neyman target.
A weighted chi-square Lyapunov function on the allocation simplex then gives simultaneous convergence of all coordinates.

\begin{assumption}[Fixed-$H$ iid TS--Neyman regime]
\label{ass:fixed-h}
The following conditions hold.
\begin{enumerate}[leftmargin=1.5em]
\item $H\ge2$ is fixed and finite.  Each stratum produces an independent iid stream with
$\E[Y_{h1}]=\mu_h$ and $\Var(Y_{h1})=\sigma_h^2\in(0,\infty)$.
\item The algorithm is initialized with $m_0\ge 2$ samples per stratum.
\item At time $t$, TS--Neyman draws $\wt\sigma_h^2(t)$ from the working posterior \eqref{eq:posterior} independently across strata conditional on the past, and selects the stratum with largest sampled index \eqref{eq:tsindex}, breaking ties uniformly at random.
\end{enumerate}
Let
\begin{equation}
w_h=N_h\sigma_h,
\qquad
W=\sum_{g=1}^H w_g,
\qquad
p_h=\frac{w_h}{W},
\qquad h=1,\dots,H. 
\label{eq:fixedh-target}
\end{equation}

\end{assumption}

The following theorem shows that the empirical allocation of \textsc{TS--Neyman} converges almost surely to the oracle Neyman target $(p_1, \ldots, p_H)$, even though the unknown variances $\sigma_h^2$ are learned only through randomized posterior draws rather than supplied as known inputs. 
\begin{theorem}[Allocation consistency for fixed $H$]
\label{thm:consistency}
Under Assumption~\ref{ass:fixed-h},
\[
\frac{n_h(t)}{t}\toas p_h,
\qquad h=1,\dots,H.
\]
\end{theorem}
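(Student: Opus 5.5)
The plan has three stages: first show that every stratum is sampled infinitely often, then show that the posterior draws become sharp, and finally turn the selection rule into a drift toward the Neyman target on the allocation simplex.

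\textbf{Step 1: every stratum is sampled infinitely often.} Suppose, on some event of positive probability, stratum $h$ stops being sampled after a finite time. Then $n_h(t)$ freezes at some finite $m\ge m_0$. Its posterior \eqref{eq:posterior} is then a fixed $\mathrm{Inv}\text{-}\chi^2$ law with finite degrees of freedom, so $\wt\sigma_h^2(t)$ exceeds any level $M$ with a conditional probability $q(M)>0$ that does not depend on $t$.

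Meanwhile some other stratum $g$ has $n_g(t)\ge t/H$. For that stratum, $n_g(t)^2$ grows quadratically while its draws stay bounded, as Step 2 will show. Consider the event that $\wt\Delta_h(t)>\max_{g\neq h}\wt\Delta_g(t)$. Its conditional probability is bounded below by a constant for all large $t$; indeed it tends to one, because the index of the frozen stratum is of order one while the indices of the growing strata vanish.

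The draws are conditionally independent across rounds given the past. The conditional Borel--Cantelli lemma (L\'evy's extension) therefore shows that stratum $h$ is selected infinitely often, which is a contradiction. To handle strata that are all frozen simultaneously, I will run the same argument on the set of strata with bounded counts.

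\textbf{Step 2: posterior concentration.} Once $n_h(t)\to\infty$, the strong law gives $s_h^2(t)\toas\sigma_h^2$. The draw can be written as
\[
\wt\sigma_h^2(t)=\frac{\nu_h(t)\,\bar s_h^2(t)}{\chi^2_{\nu_h(t)}}, \qquad \nu_h(t)=\nu_0+n_h(t)-1,
\]
where $\bar s_h^2(t)$ is the posterior scale parameter in \eqref{eq:posterior}. A chi-square tail bound gives $\Prob(|\chi^2_\nu/\nu-1|>\varepsilon)\le 2e^{-c\varepsilon^2\nu}$.

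This bound is not summable in $t$ directly, because $\nu_h(t)$ is only updated when $h$ is chosen. The fix is a union bound over the count level $k=n_h(t)$. For a given $k$, the number of rounds during which $n_h(t)=k$ may be large, so I will instead bound $\Prob(\exists\, t\le T: \text{bad draw at level } k)$ crudely, or use the bound $\sum_t e^{-c\varepsilon^2 n_h(t)}$ together with $n_h(t)\gtrsim t^{\alpha}$ obtained from a quantitative version of Step 1. My first attempt will be this polynomial lower rate. With it, Borel--Cantelli gives $\wt\sigma_h^2(t)/\sigma_h^2\in[1-\varepsilon,1+\varepsilon]$ for all large $t$, almost surely, simultaneously for all $h$.

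\textbf{Step 3: Lyapunov drift.} Let $x_h(t)=n_h(t)/t$. After the concentration time, $\wt\Delta_h(t)\approx w_h^2/(t^2x_h^2)$. The index is therefore maximized at an $h$ whose ratio $x_h/p_h$ is within a factor $(1\pm\varepsilon)$ of the minimal ratio, and such a stratum has $x_h\le(1+\varepsilon)p_h$.

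Take the Lyapunov function
\[
L(x)=\sum_h\frac{(x_h-p_h)^2}{p_h}.
\]
The update is $x(t+1)=x(t)+\{e_{A_{t+1}}-x(t)\}/(t+1)$. Its first-order change is
\[
\frac{2}{t+1}\Bigl(\frac{x_A}{p_A}-\sum_h\frac{x_h^2}{p_h}\Bigr),
\]
which equals $\frac{2}{t+1}\bigl(x_A/p_A-1-L\bigr)$, plus an $O(t^{-2})$ remainder. Since $\min_h x_h/p_h\le1$ and the chosen $A$ is nearly the minimizer, we get
\[
L(t+1)\le L(t)-\frac{2L(t)}{t+1}+\frac{2\varepsilon}{t+1}+\frac{C}{t^2}.
\]
This is a deterministic recursion after a random time. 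It yields $\limsup L(t)\le\varepsilon$; the Robbins--Siegmund lemma covers the stochastic form if one prefers to keep expectations. Letting $\varepsilon\downarrow0$ along a countable sequence gives $L(t)\to0$, and hence $x_h(t)\to p_h$ almost surely.

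The main obstacle is Step 2 combined with Step 1: it requires a quantitative rate showing every $n_h(t)$ grows fast enough that the anti-concentration tails of the draws are summable. I would first establish $n_h(t)\ge c\log^2 t$, or a polynomial rate, using the heavy $\mathrm{Inv}\text{-}\chi^2$ upper tail. The argument is that a stratum lagging at $n_h=k$ wins each round with probability at least $\Prob\bigl(\chi^2_{\nu}\le \nu\,\bar s_h^2 k^{2}/(C t^{2})\bigr)$, which is of polynomial order in $k/t$.
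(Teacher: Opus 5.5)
\textbf{Where the proposal stands.} Your Steps~1 and~3 follow the paper's architecture. Step~1 is persistent exploration: a frozen stratum keeps a full-support posterior, and conditional Borel--Cantelli then forces it to be sampled again, as in Proposition~\ref{prop:infinite-sampling}. Step~3 uses the same weighted chi-square Lyapunov function $L=\sum_h(x_h-p_h)^2/p_h$, which is the paper's $Q_t$, with the same first-order increment $\frac{2}{t+1}(r_{A_{t+1}}(t)-1-L)$, where $r_h=x_h/p_h$.

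\textbf{The gap is in Step~2, where you yourself locate the obstacle.} To make Step~3 a deterministic recursion after a random time, you need every draw $\wt\sigma_h^2(t)$ to lie within a factor $1\pm\varepsilon$ of $\sigma_h^2$ for all large $t$. A fresh draw is made at every round, so conditional Borel--Cantelli requires $\sum_t e^{-c\varepsilon^2 n_h(t)}<\infty$ for every $\varepsilon>0$. That essentially means $n_h(t)/\log t\to\infty$ almost surely, and nothing in the proposal proves it. The sketched route also does not work as written. A stratum with $n_h=k\ll t$ has index of order $k^{-2}$. Against well-sampled strata, whose indices are of order $t^{-2}$, it loses only if $\chi^2_\nu\gtrsim\nu\bar s_h^2t^2/k^2$. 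So the relevant winning probability is close to one, not polynomially small in $k/t$; your inequality appears inverted. More importantly, the hard part of any quantitative bound is left open. Several under-sampled strata compete with one another, and their draws are not yet concentrated and carry heavy inverse-$\chi^2$ upper tails. As it stands, the argument is incomplete.

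\textbf{The missing idea: no rate is needed, because pathwise concentration is never needed.} Keep Step~2 in conditional form, as in Lemma~\ref{lem:posterior-concentration}. On $\{n_h(t)\to\infty\}$,
\[
\rho_t:=\sum_h\Prob\bigl(|\wt\sigma_h^2(t)-\sigma_h^2|>\delta\mid\mathcal F_t\bigr)\to0
\]
almost surely, using only $s_h^2(t)\to\sigma_h^2$ and $\chi^2_\nu/\nu\to1$ in probability. This conditional form is also all that Step~1 needs, which removes the circularity between your Steps~1 and~2.

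\textbf{Repairing Step~3.} Apply your near-minimizer argument to the selection probabilities $\pi_{t+1,h}=\Prob(A_{t+1}=h\mid\mathcal F_t)$ rather than to the realized choice. Using $r_h\le B:=\max_h 1/p_h$ on the bad event, this gives, for large $t$,
\[
\sum_h\pi_{t+1,h}r_h(t)\le 1+C\delta+\kappa_t,\qquad \kappa_t\to0\ \text{a.s.}
\]
Write $r_{A_{t+1}}(t)=\sum_h\pi_{t+1,h}r_h(t)+\xi_{t+1}$. Here $\xi_{t+1}$ is a martingale difference bounded by $B$, so $\sum_t\xi_{t+1}/(t+1)$ converges almost surely. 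Your recursion becomes
\[
L(t+1)\le\Bigl(1-\frac{2}{t+1}\Bigr)L(t)+\frac{2(C\delta+\kappa_t)}{t+1}+\frac{2\xi_{t+1}}{t+1}+\frac{C'}{t^2}.
\]
This still yields $\limsup_t L(t)\le C\delta$ almost surely: subtract the solution of the linear recursion driven by $\xi_{t+1}/(t+1)$ alone, which tends to zero. Letting $\delta\downarrow0$ along a countable sequence gives $L(t)\to0$.

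\textbf{Comparison with the paper.} The repaired argument is in substance the paper's route: Lemma~\ref{lem:underallocated-selection} plus Robbins--Siegmund applied to $\E[Q_{t+1}\mid\mathcal F_t]$. One point favors your version. Your near-argmin bound controls the drift on $\{L(t)<\varepsilon\}$ as well. The paper's ranking lemma bounds $\sum_h\pi_{t+1,h}r_h(t)$ only on $\{Q_t\ge\varepsilon\}$, so the summability of its error term on $\{Q_t<\varepsilon\}$ needs exactly this kind of extra argument.
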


The proof, given in Appendix~\ref{app:proofs}, has four ingredients.
First, every stratum is sampled infinitely often; otherwise its inverse-$\chi^2$ working posterior remains nondegenerate while the indices of increasingly sampled strata vanish, forcing it to be selected again.
Second, along every stratum, the working posterior for $\sigma_h^2$ concentrates on the true variance.
Third, if the allocation vector is away from the Neyman target, then at least one relative ratio $x_h(t)/p_h$ is bounded below one; after posterior concentration the TS index ranks such under-allocated strata above sufficiently over-allocated ones.
Finally, the weighted chi-square Lyapunov function
$Q_t=\sum_{h=1}^H p_h\left(x_h(t)/p_h-1\right)^2$ has negative drift whenever it is bounded away from zero, and a Robbins--Siegmund argument yields $Q_t\to0$ almost surely.

The following theorem shows that the variance proxy realized by \textsc{TS--Neyman} is asymptotically as small as any oracle integer allocation could make it, with the ratio of realized to optimal variance converging to one almost surely. 
\begin{theorem} 
\label{thm:efficiency}
Under Assumption~\ref{ass:fixed-h}, let
\[
V(\mathbf n)=\sum_{h=1}^H \frac{N_h^2\sigma_h^2}{n_h}
=\sum_{h=1}^H \frac{w_h^2}{n_h},
\]
and let $V^\star(t)$ denote the minimum of $V(\mathbf m)$ over integer allocations with $\sum_h m_h=t$ and $m_h\ge 1$.
Then
\[
\frac{V\bigl(\mathbf n(t)\bigr)}{V^\star(t)}\toas 1.
\]
\end{theorem}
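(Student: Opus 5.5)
We will derive Theorem~\ref{thm:efficiency} directly from Theorem~\ref{thm:consistency} by working with the rescaled quantity $tV(\mathbf n(t))$. The idea is to show that both $tV(\mathbf n(t))$ and $tV^\star(t)$ converge almost surely to the same constant $W^2=(\sum_g w_g)^2$, which is the value of the continuous Neyman problem.

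For the numerator, we write
\[
tV(\mathbf n(t))=\sum_{h=1}^H \frac{w_h^2}{n_h(t)/t}.
\]
Theorem~\ref{thm:consistency} gives $n_h(t)/t\toas p_h$. By Assumption~\ref{ass:fixed-h}, $\sigma_h^2>0$, and since $N_h\ge1$ we have $w_h>0$, so $p_h>0$. The map $x\mapsto\sum_h w_h^2/x_h$ is continuous on the open positive orthant. Hence, on the probability-one event where consistency holds, $tV(\mathbf n(t))\to\sum_h w_h^2/p_h=\sum_h w_h W=W^2$.

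For the denominator $V^\star(t)$, which is deterministic, we bound it from both sides.

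\textbf{Lower bound.} For any real allocation with $m_h>0$ and $\sum_h m_h=t$, Cauchy--Schwarz gives
\[
\Bigl(\sum_h w_h\Bigr)^2\le\Bigl(\sum_h \frac{w_h^2}{m_h}\Bigr)\Bigl(\sum_h m_h\Bigr).
\]
Therefore $tV^\star(t)\ge W^2$ for every $t$.

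\textbf{Upper bound.} We exhibit a feasible integer allocation: take $m_h=\max(1,\lfloor tp_h\rfloor)$ and then add the remaining few units to any strata. The leftover is at most $H$ units, and adding units only decreases $V$. For large $t$ we have $\lfloor tp_h\rfloor\ge tp_h-1\ge1$. This gives
\[
tV^\star(t)\le\sum_h \frac{w_h^2\,t}{tp_h-1}\longrightarrow W^2.
\]
Hence $tV^\star(t)\to W^2$. Alternatively, one could appeal to the Huntington--Hill characterization at the end of Section~\ref{sec:setup}, but the explicit rounding argument is simpler.

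\textbf{Conclusion.} Dividing the two limits gives $V(\mathbf n(t))/V^\star(t)\toas W^2/W^2=1$.

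There is no substantial obstacle here, since all of the probabilistic difficulty is carried by Theorem~\ref{thm:consistency}. The only care needed is at two points. First, $p_h>0$ is required so that continuity applies away from the boundary of the simplex. Second, the integer rounding in the upper bound on $V^\star(t)$ must respect the constraint $m_h\ge1$ and the exact budget $\sum_h m_h=t$; both are handled for all sufficiently large $t$, which is all the limit requires.
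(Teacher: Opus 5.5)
Your proposal is correct and follows essentially the same route as the paper: Theorem~\ref{thm:consistency} gives $tV(\mathbf n(t))\to W^2$ almost surely, Cauchy--Schwarz gives $tV^\star(t)\ge W^2$, and an explicit integer rounding of $tp_h$ supplies the matching upper bound. Your floor-based rounding, with the leftover units added back (which can only decrease $V$), is a slightly more explicit version of the paper's choice of integers $\tilde m_h(t)$ with $|\tilde m_h(t)-tp_h|\le C_H$.
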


The proof, given in Appendix~\ref{app:proofs}, has two ingredients. By Theorem~\ref{thm:consistency}, $n_h(t) = t p_h + o(t)$ almost surely, which yields $V(n(t)) = W^2/t + o(1/t)$ by direct substitution and the identity $w_h^2/(t p_h) = W w_h / t$. A Cauchy--Schwarz lower bound, together with a matching upper bound from any integer rounding of the continuous Neyman allocation, shows $V^\star(t) = W^2/t + O(t^{-2})$. The ratio therefore converges to one almost surely.

\begin{corollary}[Adaptive estimator consistency and CLT for fixed $H$]
\label{cor:clt}
Under Assumption~\ref{ass:fixed-h}, the adaptive stratified estimator
\begin{equation}
\wh\mu_t
=
\frac{1}{N}\sum_{h=1}^H N_h \bar Y_h(t)
\label{eq:mu-t}
\end{equation}
satisfies
\[
\wh\mu_t\toas \mu:=\frac{1}{N}\sum_{h=1}^H N_h\mu_h,
\]
and
\[
\sqrt{t}\,\bigl(\wh\mu_t-\mu\bigr)
\rightsquig
\mathcal N\!\left(
0,\;
\frac{1}{N^2}\sum_{h=1}^H \frac{N_h^2\sigma_h^2}{p_h}
\right)
=
\mathcal N\!\left(
0,\;
\frac{1}{N^2}\left(\sum_{h=1}^H N_h\sigma_h\right)^2
\right).
\]
\end{corollary}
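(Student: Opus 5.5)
Consistency is the easy half. By the first ingredient in the proof of Theorem~\ref{thm:consistency}, or more directly since $n_h(t)/t\toas p_h>0$, every stratum is sampled infinitely often and $n_h(t)\to\infty$ almost surely. Under the superpopulation view, the observations in stratum $h$ are the first $n_h(t)$ terms of a fixed iid stream $Y_{h1},Y_{h2},\dots$; the adaptive rule only decides how far along each stream we read. Hence $\bar Y_h(t)=n_h(t)^{-1}\sum_{i\le n_h(t)}Y_{hi}$ is a random subsequence of the running means of an iid sequence. The strong law applied to each stream, composed with $n_h(t)\to\infty$, gives $\bar Y_h(t)\toas\mu_h$, and summing over the finitely many $h$ yields $\wh\mu_t\toas\mu$.

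For the CLT I will write
\[
\sqrt t(\wh\mu_t-\mu)=\frac1N\sum_{h=1}^H N_h\sqrt{\frac{t}{n_h(t)}}\cdot\frac{1}{\sqrt{n_h(t)}}\sum_{i\le n_h(t)}(Y_{hi}-\mu_h).
\]
By Theorem~\ref{thm:consistency}, $t/n_h(t)\toas 1/p_h$. It then suffices to show that the vector $Z_t=(Z_{1}(t),\dots,Z_H(t))$, with $Z_h(t)=n_h(t)^{-1/2}\sum_{i\le n_h(t)}(Y_{hi}-\mu_h)$, converges in law to $\mathcal N(0,\mathrm{diag}(\sigma_h^2))$. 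Slutsky's lemma then gives the limit variance $N^{-2}\sum_h N_h^2\sigma_h^2/p_h$. Substituting $p_h=N_h\sigma_h/W$ turns this into $W^2/N^2$, which is the second form in the statement.

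The main obstacle is the random, data-dependent index $n_h(t)$: the allocation depends on past observations, so independence-based CLTs cannot be used directly. I would handle this with a martingale CLT under the adaptive filtration $\mathcal F_t$, generated by all observations and posterior draws up to time $t$. Arrange all observations in the global sampling order. Write $A_s$ for the stratum chosen at step $s$ and $Y_s$ for the value observed then, and consider
\[
M_t=\sum_{s\le t}\frac{N_{A_s}}{N\sqrt{t\,p_{A_s}}}\,(Y_s-\mu_{A_s}).
\]
Since $A_s$ is $\mathcal F_{s-1}$-measurable and the next draw from the chosen stream is independent of the past, the increments form a martingale difference array. The initial $Hm_0$ pilot terms contribute $O(t^{-1/2})$ and can be ignored.

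The conditional variance is $\sum_{s\le t}N_{A_s}^2\sigma_{A_s}^2/(N^2tp_{A_s})=N^{-2}\sum_h (n_h(t)/t)\,N_h^2\sigma_h^2/p_h$, which converges almost surely to the constant $N^{-2}\sum_hN_h^2\sigma_h^2$ by Theorem~\ref{thm:consistency}. I would check the conditional Lindeberg condition using only finite second moments and identical distribution within each stream: the relevant quantity is bounded by $\sum_h \E[(Y_{h1}-\mu_h)^2\1\{|Y_{h1}-\mu_h|>\epsilon\sqrt t\,c\}]\to0$. The martingale CLT (e.g.\ Hall--Heyde) then gives $M_t\rightsquig\mathcal N(0,N^{-2}\sum_h N_h^2\sigma_h^2)$.

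Finally, I would compare $M_t$ with the target statistic. The difference is
\[
\frac1N\sum_h N_h\Bigl(\frac{\sqrt t}{n_h(t)}-\frac{1}{\sqrt{tp_h}}\Bigr)\sum_{i\le n_h(t)}(Y_{hi}-\mu_h).
\]
In each term the bracketed factor is $o(t^{-1/2})$ almost surely, and the sum is $O_p(\sqrt t)$ by the same martingale bound applied stratum by stratum. Each term is therefore $o_p(1)$. Combining this with the CLT for $M_t$ gives the stated normal limit with variance $N^{-2}\sum_h N_h^2\sigma_h^2/p_h$.
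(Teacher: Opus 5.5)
\textbf{Genuine gap: the martingale is normalized wrongly.} Your final comparison step fails because of this. With weights $N_h/(N\sqrt{t\,p_h})$ you get $M_t=\frac1N\sum_h N_h p_h^{-1/2}\,t^{-1/2}M_h^Y(t)$, where $M_h^Y(t)=\sum_{i\le n_h(t)}(Y_{hi}-\mu_h)$. You correctly compute that its predictable variance tends to $N^{-2}\sum_h N_h^2\sigma_h^2=N^{-2}\sum_h w_h^2$.

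Since $H\ge2$ and every $w_h>0$, this is strictly smaller than the claimed $N^{-2}\sum_h N_h^2\sigma_h^2/p_h=W^2/N^2$. So $M_t$ cannot be asymptotically equivalent to $\sqrt t(\wh\mu_t-\mu)$, yet your last paragraph asserts that it is. Concretely, the bracket in your comparison is
\[
\frac{\sqrt t}{n_h(t)}-\frac{1}{\sqrt{t\,p_h}}
=\frac{1}{\sqrt t}\Bigl(\frac{t}{n_h(t)}-\frac{1}{\sqrt{p_h}}\Bigr)
=\frac{1}{\sqrt t}\Bigl(\frac{1}{p_h}-\frac{1}{\sqrt{p_h}}+o(1)\Bigr)\quad\text{a.s.}
\]
This is of exact order $t^{-1/2}$ because $p_h\in(0,1)$; it is not $o(t^{-1/2})$. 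The difference between the target and $M_t$ is therefore
\[
\frac1N\sum_h N_h\bigl(p_h^{-1}-p_h^{-1/2}\bigr)\,t^{-1/2}M_h^Y(t)+o_p(1),
\]
which converges to a nondegenerate normal rather than to zero. The slip is that $\sqrt{t/n_h(t)}\,n_h(t)^{-1/2}\approx 1/(p_h\sqrt t)$. You replaced it by $n_h(t)^{-1/2}\approx 1/\sqrt{t p_h}$, losing the factor $\sqrt{t/n_h(t)}\to p_h^{-1/2}$ that you had correctly isolated in your opening decomposition.

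\textbf{The fix.} Weight step $s$ by $N_{A_s}/(N\sqrt t\,p_{A_s})$ instead. Then:
\begin{itemize}
\item the predictable variance is $N^{-2}\sum_h (n_h(t)/t)\,N_h^2\sigma_h^2/p_h^2\toas N^{-2}\sum_h N_h^2\sigma_h^2/p_h$;
\item your Lindeberg bound is unaffected, because $\min_h p_h>0$;
\item the bracket becomes $t^{-1/2}\{t/n_h(t)-1/p_h\}=o(t^{-1/2})$ a.s., so the remainder really is $o_p(1)$.
\end{itemize}

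With that correction your route is valid and somewhat leaner than the paper's. The paper proves the vector CLT $\mathbf M^Y(t)/\sqrt t\rightsquig\mathcal N(0,\operatorname{diag}(\sigma_h^2p_h))$ under a filtration that does not contain the posterior draws. Its quadratic variation is therefore expressed through the action probabilities $\pi_{s+1,h}$, and it needs an extra compensator step ($M_h^N(t)/t\toas0$, hence $t^{-1}\sum_s\pi_{s+1,h}\toas p_h$) before Slutsky with $t/n_h(t)\toas1/p_h$.

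By putting the posterior draws into the filtration, you make $A_s$ predictable. The quadratic variation can then be read off directly from $n_h(t)$, and a single fixed linear combination replaces the vector CLT and Cram\'er--Wold. Your consistency argument (the ordinary strong law along each fixed stream, composed with $n_h(t)\to\infty$) is also correct, and more elementary than the paper's martingale strong law.
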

This corollary shows that the adaptive estimator $\widehat{\mu}_t$ achieves the same $\sqrt{t}$-rate and asymptotic variance as an oracle Neyman allocation, so \textsc{TS--Neyman} pays no asymptotic cost for learning the within-stratum variances online. 

The proof, given in Appendix~\ref{app:proofs}, proceeds in two stages. Consistency follows from a martingale strong law applied to $M_h^Y(t) = R_h(t) - \mu_h n_h(t)$ together with $n_h(t) \to \infty$ from Theorem~\ref{thm:consistency}; this yields $\bar{Y}_h(t) \to \mu_h$ for each stratum. For the CLT, the predictable covariation $(1/t) \langle M^Y \rangle(t)$ converges almost surely to $\mathrm{diag}(\sigma_h^2 p_h)$ once the allocation has stabilized, a conditional Lindeberg condition holds under finite second moments, and the martingale CLT~\citep{HallHeyde1980} delivers the joint normal limit; Slutsky composition with $t/n_h(t) \to 1/p_h$ then produces the displayed variance $W^2/N^2 = (\sum_h N_h \sigma_h)^2 / N^2$.

\section{Numerical Experiments}
\label{sec:simulation}

We evaluate \textsc{TS--Neyman} on two controlled five-stratum budget-scaling benchmarks and one offline real-data replay. In all experiments, sampling is without replacement within strata, the final estimator is the stratified Horvitz--Thompson estimator, and performance is summarized by Monte Carlo bias, RMSE, and relative efficiency
\[
\mathrm{RE}(\mathcal A)=\frac{\mathrm{RMSE}(\mathcal A)^2}{\mathrm{RMSE}(\mathrm{Oracle})^2},
\]
where smaller is better and $\mathrm{RE}=1$ matches the oracle Neyman rule.

The two controlled benchmarks share the same five stratum sizes and budget grid,
\[
(N_1,N_2,N_3,N_4,N_5)=(50000,30000,15000,4000,1000),
\qquad
n\in\{100,200,400,600,1000,2000\}.
\]
They play complementary roles. The \emph{bounded-loss benchmark} is a controlled continuous-outcome population that isolates the unknown-variance / sparse-pilot effect in a clean setting. The \emph{binary model-error benchmark} is closer to the ML evaluation use case, in the budget-scaling style of \citet{DaiGraduHarshaw2023ClipOGD}: each sampled unit can be read as a held-out example, each stratum as a confidence or difficulty bin, and the observation as a binary error indicator. For the bounded-loss benchmark, the stratum means and scales are
\[
(\mu_1,\ldots,\mu_5)=(0.20,0.25,0.35,0.50,0.65),
\qquad
(\sigma_1,\ldots,\sigma_5)=(0.08,0.12,0.18,0.25,0.32),
\]
whereas in the binary benchmark each unit has
\[
Y=\mathbf 1\{\text{model error}\},
\qquad
(q_1,\ldots,q_5)=(0.02,0.05,0.10,0.20,0.35),
\]
so the strata become progressively rarer and noisier error bins.

\begin{figure*}[t]
\centering
\includegraphics[width=0.98\textwidth]{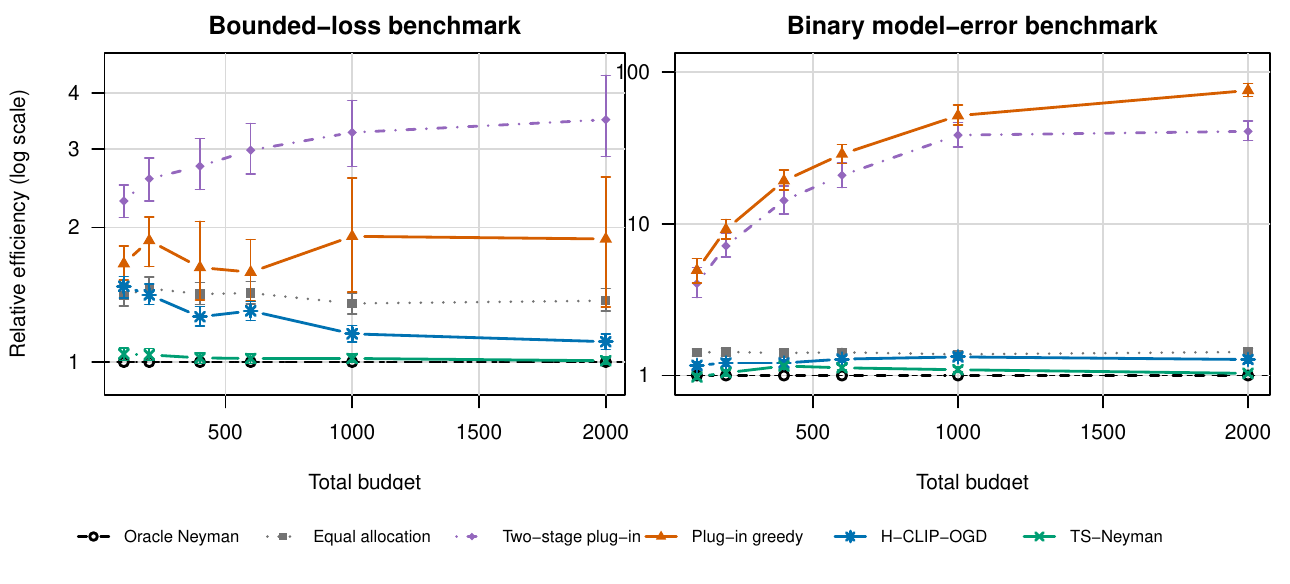}
\caption{Budget-scaling comparison for the two main controlled benchmarks. Left: bounded-loss benchmark. Right: binary model-error benchmark in the style of \protect\citet{DaiGraduHarshaw2023ClipOGD}. Both panels report relative efficiency on a log scale; the right panel uses ticks at $1$, $10$, and $100$ because the plug-in baselines deteriorate much more severely in the binary setting.}
\label{fig:budget_pair_main}
\end{figure*}

Figure~\ref{fig:budget_pair_main} compares all methods on both benchmarks. In the bounded-loss panel, TS--Neyman attains RE
\[
1.031,\;1.022,\;1.014,\;1.017,\;1.021,\;1.012,
\]
so it remains within about 5\% of the oracle across the full grid. Over the same budgets, equal allocation ranges from RE $1.358$ to $1.422$, H-CLIP-OGD from $1.119$ to $1.447$, plug-in greedy from $1.688$ to $2.368$, and two-stage plug-in from $2.234$ to $3.896$. In the binary model-error panel, TS--Neyman attains RE
\[
0.988,\;1.066,\;1.125,\;1.141,\;1.080,\;1.052,
\]
so it stays within about 15\% of the oracle even in the sparse-binary regime where plug-in variance estimates are especially unstable. There equal allocation remains between RE $1.358$ and $1.488$, H-CLIP-OGD between $1.170$ and $1.322$, but two-stage plug-in explodes from RE $3.788$ to $37.933$ and plug-in greedy from $4.659$ to $78.837$. Table~\ref{tab:budget_pair_main} summarizes the RE ranges across the two controlled benchmarks. Together, Figure~\ref{fig:budget_pair_main} and Table~\ref{tab:budget_pair_main} show that TS--Neyman tracks the oracle well, while deterministic plug-in baselines become brittle when only a small pilot is available.

\begin{table}[h]
\caption{Range of relative efficiencies across budgets in the two main benchmarks. Oracle Neyman is normalized to 1 and omitted.}
\label{tab:budget_pair_main}
\centering
\begin{tabular}{lcccc}
\toprule
Benchmark & Plug-in greedy & H-CLIP-OGD & TS--Neyman & Two-stage plug-in \\
\midrule
Bounded loss       & 1.688--2.368 & 1.119--1.447 & 1.012--1.031 & 2.234--3.896 \\
Binary model error & 4.659--78.837 & 1.170--1.322 & 0.988--1.141 & 3.788--37.933 \\
\bottomrule
\end{tabular}
\end{table}

\paragraph{CivilComments real-data replay.} To check that this behavior is not confined to synthetic populations, we also run an offline replay on CivilComments-WILDS \citep{koh2021wilds}, derived from the Civil Comments / Jigsaw unintended-bias data \citep{borkan2019nuanced}. We train a fixed text classifier on the provided training split, choose regularization on the validation split, and treat the test split as a finite evaluation pool. For each comment $i$, we define $Y_i=\mathbf 1\{\text{the fixed classifier misclassifies comment }i\}$ and stratify the test pool by classifier-confidence quintiles. The empirical mean error falls from $0.312$ in the least-confident quintile to $0.021$ in the most-confident quintile. Figure~\ref{fig:civilcomments_main} summarizes the replay diagnostics. Panel~(a) shows this monotone error gradient, and panel~(b) shows that TS--Neyman learns nearly the same allocation profile as the oracle: at budget $n=2000$, the oracle allocates $703$ labels to $Q_1$ and TS--Neyman allocates $695$, whereas equal allocation uses $408$ and plug-in greedy overshoots to $1506$. Panels~(c) and~(d) report RMSE and relative efficiency over the same budget grid using $500$ Monte Carlo replays. TS--Neyman attains RE $0.900, 1.006, 1.004, 1.021, 1.045, 1.076$, remaining within about $8\%$ of the oracle throughout. Equal allocation ranges from RE $1.011$ to $1.212$, while two-stage plug-in ranges from $1.316$ to $15.731$ and plug-in greedy from $1.790$ to $33.604$. Thus the same pattern persists on a real evaluation pool: posterior sampling stays close to the oracle while avoiding severe early overcommitment. Appendix Table~\ref{tab:civilcomments_re_appendix} reports the full per-budget RE summary.

\begin{figure*}[t]
\centering
\includegraphics[width=0.98\textwidth]{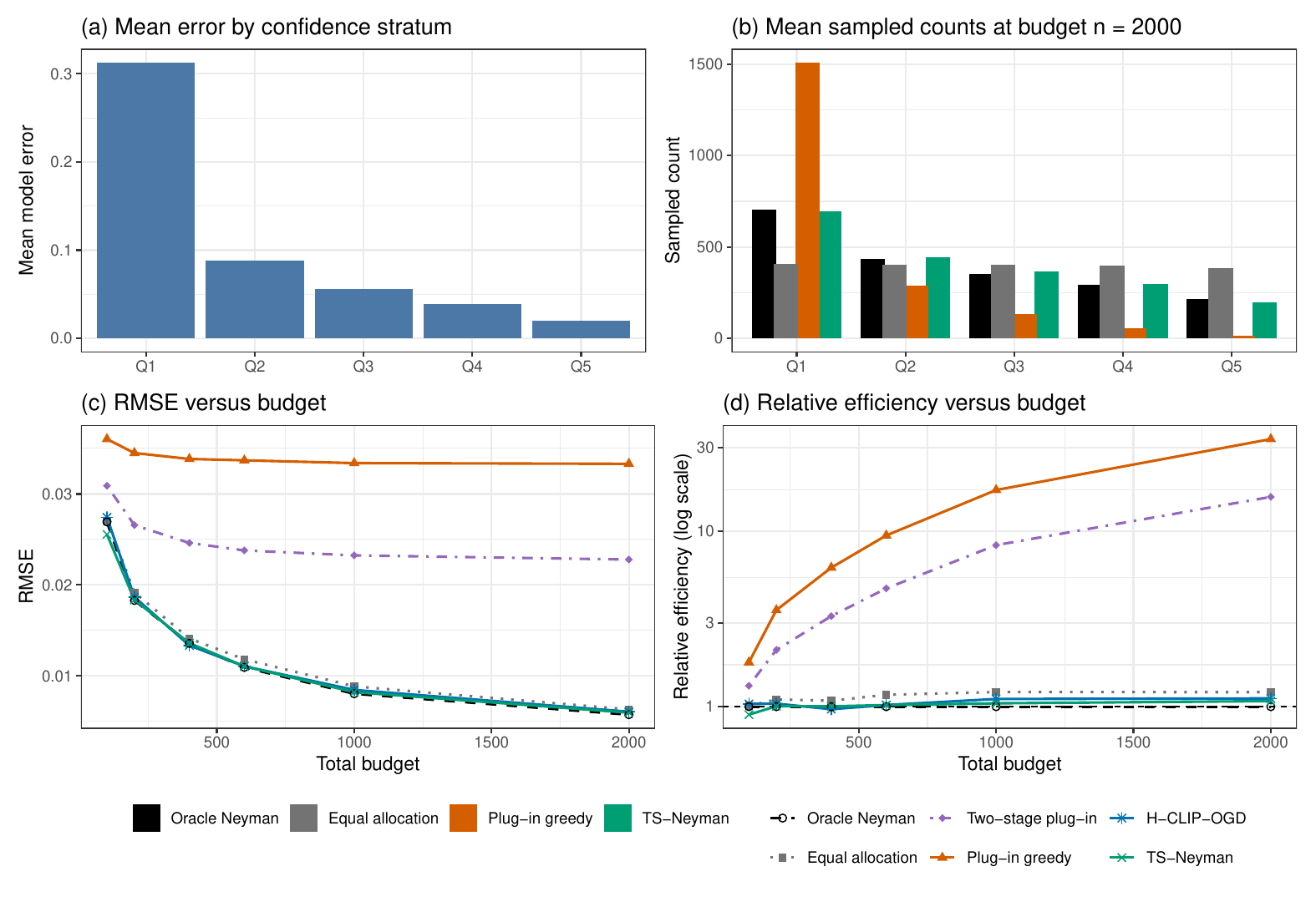}
\caption{CivilComments real-data replay with confidence-quintile strata and $Y=\mathbf{1}\{\text{model error}\}$. (a) Mean model error on the fixed test population by confidence stratum. (b) Mean sampled counts by stratum at budget $n=2000$. (c) RMSE versus budget. (d) Relative efficiency versus budget on a log scale.}
\label{fig:civilcomments_main}
\end{figure*}

\FloatBarrier

Appendix~\ref{app:additional-results} reports the common-pilot warm-start and prior-sensitivity diagnostics.

\section{Limitations and broader impact}
\label{sec:limitations}

Our theoretical analysis is asymptotic and assumes a fixed finite number of strata, positive within-stratum variances, iid sampling within each stratum in the asymptotic model, immediate observations, and one-by-one adaptive sampling.  The inverse-$\chi^2$ posterior is used as an algorithmic working model rather than as a substantive generative assumption.  The prior-sensitivity experiment suggests robustness to moderate choices, but very small pilots, heavy tails, delayed feedback, batched sampling, unequal costs, or severe working-model misspecification can still affect early allocations.  The empirical studies consist of two controlled five-stratum benchmarks together with one offline CivilComments replay rather than a broad evaluation across many ML datasets or deployment logs.  Computationally, TS--Neyman requires $O(H)$ posterior draws and index evaluations per round, so it scales linearly in the number of strata.

The potential positive impact is that adaptive Neyman allocation can reduce the number of labels, human ratings, expert reviews, or simulator calls needed to estimate average model performance, making careful model evaluation and subgroup or difficulty-bin auditing cheaper.  A potential risk is that optimizing a single average metric can allocate relatively little budget to low-variance or low-weight groups, even when those groups are substantively important for fairness, safety, or policy reporting.  In deployment-facing evaluations, practitioners should combine adaptive allocation with minimum per-stratum quotas, transparent reporting of allocation counts and uncertainty, and prespecified subgroup metrics that reflect the evaluation question rather than relying solely on aggregate precision.  The reported experiments use public benchmark data and offline replay rather than new private-data collection or human-subject studies.

\bibliographystyle{plainnat}
\bibliography{references}

@article{neyman1934two,
  author  = {Neyman, Jerzy},
  title   = {On the two different aspects of the representative method: the method of stratified sampling and the method of purposive selection},
  journal = {Journal of the Royal Statistical Society},
  year    = {1934},
  volume  = {97},
  number  = {4},
  pages   = {558--625},
  doi     = {10.2307/2342192}
}

@book{cochran1977sampling,
  author    = {Cochran, William G.},
  title     = {Sampling Techniques},
  edition   = {3rd},
  publisher = {Wiley},
  address   = {New York},
  year      = {1977}
}

@article{thompson1933likelihood,
  author  = {Thompson, William R.},
  title   = {On the likelihood that one unknown probability exceeds another in view of the evidence of two samples},
  journal = {Biometrika},
  year    = {1933},
  volume  = {25},
  number  = {3--4},
  pages   = {285--294},
  doi     = {10.2307/2332286}
}

@article{russo2018tutorial,
  author  = {Russo, Daniel J. and Van Roy, Benjamin and Kazerouni, Abbas and Osband, Ian and Wen, Zheng},
  title   = {A tutorial on {T}hompson sampling},
  journal = {Foundations and Trends in Machine Learning},
  year    = {2018},
  volume  = {11},
  number  = {1},
  pages   = {1--96},
  doi     = {10.1561/2200000070}
}

@article{EtoreJourdain2010Adaptive,
  author  = {{\'E}tor{\'e}, Pierre and Jourdain, Benjamin},
  title   = {Adaptive optimal allocation in stratified sampling methods},
  journal = {Methodology and Computing in Applied Probability},
  year    = {2010},
  volume  = {12},
  number  = {3},
  pages   = {335--360},
  doi     = {10.1007/s11009-008-9108-0}
}

@article{Zacks1970BayesianDesign,
  author  = {Zacks, S.},
  title   = {Bayesian design of single and double stratified sampling for estimating proportion in finite population},
  journal = {Technometrics},
  year    = {1970},
  volume  = {12},
  number  = {1},
  pages   = {119--130},
  doi     = {10.1080/00401706.1970.10488639}
}

@inproceedings{DaiGraduHarshaw2023ClipOGD,
  author    = {Dai, Jessica and Gradu, Paula and Harshaw, Christopher},
  title     = {{CLIP-OGD}: An experimental design for adaptive {N}eyman allocation in sequential experiments},
  booktitle = {Advances in Neural Information Processing Systems},
  volume    = {36},
  year      = {2023},
  pages     = {32235--32269},
  url       = {https://openreview.net/forum?id=2Xqvk2KVAq}
}

@article{YuZhaiSra2019NearOptimal,
  author  = {Yu, Tiancheng and Zhai, Xiyu and Sra, Suvrit},
  title   = {Near optimal stratified sampling},
  journal = {arXiv preprint arXiv:1906.11289},
  year    = {2019},
  doi     = {10.48550/arXiv.1906.11289},
  url     = {https://arxiv.org/abs/1906.11289}
}

@book{Durrett2019,
  author    = {Durrett, Rick},
  title     = {Probability: Theory and Examples},
  edition   = {5},
  publisher = {Cambridge University Press},
  year      = {2019}
}

@book{HallHeyde1980,
  author    = {Hall, Peter and Heyde, C. C.},
  title     = {Martingale Limit Theory and Its Application},
  publisher = {Academic Press},
  year      = {1980}
}

@article{Kadane2005OptimalDynamic,
  author  = {Kadane, Joseph B.},
  title   = {Optimal Dynamic Sample Allocation Among Strata},
  journal = {Journal of Official Statistics},
  year    = {2005},
  volume  = {21},
  number  = {4},
  pages   = {531--541}
}

@article{Wright2017ExactOptimal,
  author  = {Wright, Tommy},
  title   = {Exact optimal sample allocation: More efficient than Neyman},
  journal = {Statistics \& Probability Letters},
  year    = {2017},
  volume  = {129},
  pages   = {50--57},
  doi     = {10.1016/j.spl.2017.04.026}
}

@article{Wright2020GeneralExact,
  author  = {Wright, Tommy},
  title   = {A general exact optimal sample allocation algorithm: With bounded cost and bounded sample sizes},
  journal = {Statistics \& Probability Letters},
  year    = {2020},
  volume  = {165},
  pages   = {108829},
  doi     = {10.1016/j.spl.2020.108829}
}

@article{HahnHiranoKarlan2011,
  author  = {Hahn, Jinyong and Hirano, Keisuke and Karlan, Dean},
  title   = {Adaptive Experimental Design Using the Propensity Score},
  journal = {Journal of Business \& Economic Statistics},
  year    = {2011},
  volume  = {29},
  number  = {1},
  pages   = {96--108},
  doi     = {10.1198/jbes.2009.08161}
}

@misc{BlackwellPashleyValentino2023,
  author       = {Blackwell, Matthew and Pashley, Nicole E. and Valentino, Dominic},
  title        = {Batch Adaptive Designs to Improve Efficiency in Social Science Experiments},
  year         = {2023},
  howpublished = {Working paper},
  url          = {https://www.mattblackwell.org/files/papers/batch_adaptive.pdf}
}

@article{Zhao2023AdaptiveNeyman,
  author  = {Zhao, Jinglong},
  title   = {Adaptive Neyman Allocation},
  journal = {arXiv preprint arXiv:2309.08808},
  year    = {2023},
  url     = {https://arxiv.org/abs/2309.08808}
}

@article{horvitz1952generalization,
  author  = {Horvitz, D. G. and Thompson, D. J.},
  title   = {A generalization of sampling without replacement from a finite universe},
  journal = {Journal of the American Statistical Association},
  year    = {1952},
  volume  = {47},
  number  = {260},
  pages   = {663--685},
  doi     = {10.1080/01621459.1952.10483446}
}

@article{carpentier2015adaptive,
  author  = {Carpentier, Alexandra and Munos, R{\'e}mi and Antos, Andr{\'a}s},
  title   = {Adaptive Strategy for Stratified Monte Carlo Sampling},
  journal = {Journal of Machine Learning Research},
  year    = {2015},
  volume  = {16},
  number  = {68},
  pages   = {2231--2271},
  url     = {https://jmlr.org/papers/v16/carpentier15a.html}
}

@article{salehi2010efficient,
  author  = {Salehi, Mohammad and Moradi, Mohammad and Brown, Jennifer A. and Smith, David R.},
  title   = {Efficient estimators for adaptive stratified sequential sampling},
  journal = {Journal of Statistical Computation and Simulation},
  year    = {2010},
  volume  = {80},
  number  = {10},
  pages   = {1163--1179},
  doi     = {10.1080/00949650903005664}
}

@article{groves2006responsive,
  author  = {Groves, Robert M. and Heeringa, Steven G.},
  title   = {Responsive Design for Household Surveys: Tools for Actively Controlling Survey Errors and Costs},
  journal = {Journal of the Royal Statistical Society Series A: Statistics in Society},
  year    = {2006},
  volume  = {169},
  number  = {3},
  pages   = {439--457},
  doi     = {10.1111/j.1467-985X.2006.00423.x}
}

@article{schouten2013optimizing,
  author  = {Schouten, Barry and Calinescu, Melania and Luiten, Annemieke},
  title   = {Optimizing quality of response through adaptive survey designs},
  journal = {Survey Methodology},
  year    = {2013},
  volume  = {39},
  number  = {1},
  pages   = {29--58},
  url     = {https://www150.statcan.gc.ca/n1/pub/12-001-x/2013001/article/11824-eng.htm}
}

@article{yang2025optimall,
  author  = {Yang, Jasper B. and Lumley, Thomas and Shepherd, Bryan E. and Shaw, Pamela A.},
  title   = {Optimum Allocation for Adaptive Multi-Wave Sampling in {R}: The {R} Package {optimall}},
  journal = {Journal of Statistical Software},
  year    = {2025},
  volume  = {114},
  number  = {10},
  pages   = {1--31},
  doi     = {10.18637/jss.v114.i10},
  url     = {https://www.jstatsoft.org/article/view/v114i10}
}

@article{huntington1928apportionment,
  author  = {Huntington, Edward V.},
  title   = {The Apportionment of Representatives in Congress},
  journal = {Transactions of the American Mathematical Society},
  volume  = {30}, number = {1}, pages = {85--110}, year = {1928},
  doi     = {10.2307/1989268}
}

@book{balinski1982fair,
  author    = {Balinski, Michel L. and Young, H. Peyton},
  title     = {Fair Representation: Meeting the Ideal of One Man, One Vote},
  publisher = {Yale University Press}, year = {1982}, address = {New Haven}
}

@inproceedings{koh2021wilds,
  title={WILDS: A Benchmark of in-the-Wild Distribution Shifts},
  author={Koh, Pang Wei and Sagawa, Shiori and Marklund, Henrik and Xie, Sang Michael and Zhang, Marvin and Balsubramani, Akshay and Hu, Weihua and Yasunaga, Michihiro and Philipose, Matt and Beery, Sara and others},
  booktitle={Proceedings of the 38th International Conference on Machine Learning},
  year={2021}
}

@inproceedings{borkan2019nuanced,
  title={Nuanced Metrics for Measuring Unintended Bias with Real Data for Text Classification},
  author={Borkan, Daniel and Sorensen, Jeffrey and Thain, Nithum and Dixon, Lucas},
  booktitle={Companion Proceedings of the 2019 World Wide Web Conference},
  year={2019}
}

\appendix

\section{Additional numerical results}
\label{app:additional-results}

\subsection{CivilComments replay details}
\label{app:civilcomments-appendix}

Table~\ref{tab:civilcomments_re_appendix} reports the full per-budget relative-efficiency summary for the CivilComments confidence-quintile replay. Oracle Neyman is normalized to 1 and omitted.

\begin{table}[t]
\centering
\small
\caption{Relative efficiency in the CivilComments replay experiment. Oracle Neyman is normalized to 1 and omitted.}
\label{tab:civilcomments_re_appendix}
\begin{tabular}{rrrrrr}
\toprule
Budget & Equal allocation & Two-stage plug-in & Plug-in greedy & H-CLIP-OGD & TS--Neyman\\
\midrule
100 & 1.011 & 1.316 & 1.790 & 1.037 & 0.900\\
200 & 1.099 & 2.109 & 3.555 & 1.040 & 1.006\\
400 & 1.081 & 3.285 & 6.211 & 0.967 & 1.004\\
600 & 1.169 & 4.727 & 9.478 & 1.022 & 1.021\\
1000 & 1.212 & 8.347 & 17.215 & 1.106 & 1.045\\
2000 & 1.210 & 15.731 & 33.604 & 1.114 & 1.076\\
\bottomrule
\end{tabular}
\end{table}

\subsection{Warm-start robustness details}
\label{app:robustness-details}

Table~\ref{tab:robustness_common_pilot_full} reports the full common-pilot warm-start summary with Monte Carlo bias, SD, RMSE, and RE.  The common-pilot design uses the same $m_0\in\{2,5,10\}$ for plug-in greedy, H-CLIP-OGD, and TS--Neyman.

\begin{table}[t]
\centering
\small
\caption{Full common-pilot warm-start summary. RE is relative to Oracle Neyman.}
\label{tab:robustness_common_pilot_full}
\begin{tabular}{llrrrr}
\toprule
Setting & Method & Bias & SD & RMSE & RE\\
\midrule
Gaussian ($m_0=2$) & Oracle Neyman & 0.001 & 0.173 & 0.173 & 1.000\\
Gaussian ($m_0=2$) & Plug-in greedy & -0.001 & 0.224 & 0.224 & 1.679\\
Gaussian ($m_0=2$) & H-CLIP-OGD & 0.002 & 0.180 & 0.180 & 1.088\\
Gaussian ($m_0=2$) & TS-Neyman & 0.002 & 0.173 & 0.173 & 1.001\\
Gaussian ($m_0=5$) & Oracle Neyman & -0.007 & 0.175 & 0.175 & 1.000\\
Gaussian ($m_0=5$) & Plug-in greedy & -0.007 & 0.175 & 0.175 & 1.000\\
Gaussian ($m_0=5$) & H-CLIP-OGD & -0.006 & 0.180 & 0.180 & 1.065\\
Gaussian ($m_0=5$) & TS-Neyman & -0.006 & 0.174 & 0.175 & 0.998\\
Gaussian ($m_0=10$) & Oracle Neyman & -0.001 & 0.171 & 0.171 & 1.000\\
Gaussian ($m_0=10$) & Plug-in greedy & -0.001 & 0.171 & 0.171 & 1.001\\
Gaussian ($m_0=10$) & H-CLIP-OGD & 0.000 & 0.175 & 0.175 & 1.048\\
Gaussian ($m_0=10$) & TS-Neyman & -0.001 & 0.170 & 0.170 & 0.993\\
Uniform ($m_0=2$) & Oracle Neyman & 0.000 & 0.173 & 0.173 & 1.000\\
Uniform ($m_0=2$) & Plug-in greedy & -0.002 & 0.299 & 0.298 & 2.962\\
Uniform ($m_0=2$) & H-CLIP-OGD & -0.001 & 0.191 & 0.191 & 1.211\\
Uniform ($m_0=2$) & TS-Neyman & -0.001 & 0.175 & 0.175 & 1.016\\
Uniform ($m_0=5$) & Oracle Neyman & -0.004 & 0.176 & 0.176 & 1.000\\
Uniform ($m_0=5$) & Plug-in greedy & -0.005 & 0.178 & 0.178 & 1.018\\
Uniform ($m_0=5$) & H-CLIP-OGD & -0.002 & 0.181 & 0.181 & 1.055\\
Uniform ($m_0=5$) & TS-Neyman & -0.005 & 0.177 & 0.177 & 1.004\\
Uniform ($m_0=10$) & Oracle Neyman & 0.003 & 0.176 & 0.176 & 1.000\\
Uniform ($m_0=10$) & Plug-in greedy & 0.003 & 0.177 & 0.177 & 1.004\\
Uniform ($m_0=10$) & H-CLIP-OGD & 0.002 & 0.179 & 0.178 & 1.024\\
Uniform ($m_0=10$) & TS-Neyman & 0.003 & 0.176 & 0.176 & 0.999\\
\bottomrule
\end{tabular}
\end{table}

\subsection{Prior-sensitivity details}
\label{app:prior-sensitivity}

Table~\ref{tab:prior_sensitivity} reports the TS--Neyman prior-sensitivity grid over $(\nu_0,s_0^2)$ in the sparse-pilot regime.  The row with $s_0^2=\widehat s^2_{\mathrm{pilot}}$ uses the pooled pilot variance and is intended as a scale-adaptive default.

\begin{table}[t]
\centering
\caption{Prior sensitivity of TS-Neyman in the sparse-pilot setting. RE is relative to Oracle Neyman.}
\label{tab:prior_sensitivity}
\begin{tabular}{llrrrr}
\toprule
$\nu_0$ & $s_0^2$ & Bias & SD & RMSE & RE\\
\midrule
0.5 & 0.25 & 0.007 & 0.170 & 0.170 & 0.943\\
2 & 0.25 & 0.006 & 0.179 & 0.179 & 1.040\\
5 & 0.25 & 0.004 & 0.176 & 0.176 & 1.011\\
0.5 & 1 & 0.000 & 0.175 & 0.174 & 0.989\\
2 & 1 & 0.004 & 0.176 & 0.176 & 1.002\\
5 & 1 & 0.003 & 0.175 & 0.175 & 0.996\\
0.5 & 4 & 0.006 & 0.175 & 0.175 & 0.997\\
2 & 4 & 0.001 & 0.171 & 0.171 & 0.947\\
5 & 4 & 0.004 & 0.176 & 0.176 & 1.012\\
0.5 & $\widehat s^2_{\mathrm{pilot}}$ & 0.002 & 0.179 & 0.179 & 1.040\\
2 & $\widehat s^2_{\mathrm{pilot}}$ & 0.004 & 0.176 & 0.176 & 1.007\\
5 & $\widehat s^2_{\mathrm{pilot}}$ & 0.009 & 0.177 & 0.177 & 1.018\\
\bottomrule
\end{tabular}
\end{table}

\subsection{Reproducibility details}
\label{app:repro}

The experiment grid is implemented in base R and uses no non-base packages. The reported run used 2000 Monte Carlo repetitions, 500 bootstrap repetitions for the budget-scaling confidence intervals, and PSOCK parallelism with the default worker count set to the detected core count minus one. The recorded wall-clock times were 200.8 seconds for the bounded-loss budget-scaling study, 197.4 seconds for the binary model-error benchmark, 127.6 seconds for the common-pilot warm-start study, and 124.8 seconds for the prior-sensitivity study, for a total of 650.6 seconds. The run was performed in R 4.5.2 on macOS Tahoe 26.4.1; the output directory stores the exact tables, figures, raw \texttt{.rds} summaries, \texttt{timing.csv}, and \texttt{sessionInfo.txt}.

\section{H-CLIP-OGD construction}
\label{app:hclipogd}

Section~\ref{sec:hclipogd} introduces $H$-CLIP-OGD as an empirical comparator inspired by CLIP-OGD~\citep{DaiGraduHarshaw2023ClipOGD}. Here we give the full construction.

\paragraph{Naming caveat.}
We call the algorithm $H$-CLIP-OGD for brevity. The name should be read as ``CLIP-OGD-inspired projected-gradient baseline for $H$ strata'' rather than as a claim that the algorithm is the same method as CLIP-OGD with the same guarantees. CLIP-OGD~\citep{DaiGraduHarshaw2023ClipOGD} is a two-arm sequential-experiment method whose decision variable is a single treatment probability and whose gradient estimator is inverse-probability-weighted; it comes with a $\widetilde{\mathcal{O}}(\sqrt{T})$ Neyman-regret guarantee in that setting. Our $H$-stratum baseline replaces the two-arm probability by a simplex-valued design variable on $\Delta_H$ and uses a plug-in gradient. We do not claim a corresponding finite-sample regret theorem; $H$-CLIP-OGD is included only as a fair empirical comparator, and all formal results in Section~\ref{sec:theory} concern \textsc{TS--Neyman} alone.

\paragraph{Continuous relaxation.}
For the gradient algorithm we replace the exact integer-allocation objective $V(n; S)$ in \eqref{eq:proxy} by the continuous proxy
\begin{equation}
F(p; s^2) \;=\; \sum_{h=1}^H \frac{(N_h/N)^2 \, s_h^2}{p_h}, \qquad p \in \Delta_H,
\label{eq:fproxy}
\end{equation}
where $s_h^2$ is the current sample variance in stratum $h$ and $\Delta_H$ is the simplex. With $p_h = n_h / T$, this differs from $V(n; S)$ only by the constant factor $N^{-2}$, so the two objectives have the same minimizer in proportions: the Neyman target $p_h^\star = N_h \sigma_h / \sum_g N_g \sigma_g$.

The marginal-gain index used by \textsc{TS--Neyman} and the continuous gradient of $F$ used by $H$-CLIP-OGD encode the same first-order information about $V$, in discrete and continuous geometries respectively. Writing $\phi_h(x) = N_h^2 \sigma_h^2 / x$, the marginal gain in \eqref{eq:marginal-gain} is the negative forward difference
\[
\Delta_h(n_h; \sigma_h^2) \;=\; \phi_h(n_h) - \phi_h(n_h + 1) \;=\; \frac{N_h^2 \sigma_h^2}{n_h (n_h + 1)},
\]
while the continuous derivative is $\phi_h'(x) = -N_h^2 \sigma_h^2 / x^2$, agreeing with the forward difference up to $\mathcal{O}(n_h^{-3})$ for large $n_h$. Substituting $n_h = T p_h$ gives $\Delta_h(T p_h; \sigma_h^2) \approx -T^{-2} \, \partial F / \partial p_h$. \textsc{TS--Neyman} takes integer steps at the posterior-randomized argmax of these increments, while $H$-CLIP-OGD takes small continuous steps in $p_t$ and then samples from the updated distribution; both climb the same Neyman objective, with different step shapes.

\paragraph{Plug-in gradient.}
The unknown variances $\sigma_h^2$ are replaced by the running sample variances $s_h^2(t)$, giving the plug-in gradient
\begin{equation}
\widehat{g}_t^{(p)}(h) \;=\; -\,\frac{(N_h/N)^2 \, s_h^2(t)}{p_t(h)^2}.
\label{eq:fgrad}
\end{equation}
This is consistent (but not unbiased) on $\{ n_h(t) \to \infty \}$, which the clipping schedule below guarantees.

\paragraph{Clipping schedule.}
The clipping level
\begin{equation}
\varepsilon_t \;=\; \min\!\left( \frac{1}{2H},\; c_\varepsilon \, t^{-\beta} \right), \qquad \beta \in (0, 1],
\label{eq:epsilon}
\end{equation}
is chosen to satisfy three properties: (i) $H \varepsilon_t \leq 1/2$ at every $t$, so the mixture below is well-defined; (ii) $\varepsilon_t \to 0$, so the constraint relaxes asymptotically; and (iii) $\sum_t \varepsilon_t = \infty$ when $\beta \leq 1$, which is what a persistent-exploration argument needs. The role of $\varepsilon_t$ is to enforce a uniform lower bound $p_t(h) \geq \varepsilon_t$ on every stratum at every round, so that no stratum's variance estimate can freeze and the gradient $\widehat{g}_t^{(p)}$ stays bounded.

\paragraph{$\varepsilon$-mixture parametrization.}
To enforce the lower bound while keeping the projection step standard, we parametrize $p_t$ as a mixture of an unconstrained simplex iterate $q_t \in \Delta_H$ with the uniform distribution:
\begin{equation}
p_t \;=\; (1 - H \varepsilon_t)\, q_t + \varepsilon_t \, \mathbf{1},
\label{eq:epsmixture}
\end{equation}
where $\mathbf{1}$ is the all-ones vector. This guarantees $p_t(h) \geq \varepsilon_t$ exactly, with no further projection beyond standard simplex projection on $q_t$. The chain-rule factor $\partial p_t / \partial q_t = (1 - H \varepsilon_t)\, I$ is carried explicitly into the gradient step:
\begin{equation}
q_{t+1} \;=\; \Pi_{\Delta_H}\!\left( q_t - \eta_t (1 - H \varepsilon_t)\, \widehat{g}_t^{(p)} \right),
\label{eq:fpgd}
\end{equation}
with $\Pi_{\Delta_H}$ the Euclidean projection onto the standard simplex. The factor $(1 - H \varepsilon_t)$ tends to one as $\varepsilon_t \to 0$ and could be absorbed into the step size, but we keep it explicit for correctness.

\paragraph{Randomized action and pseudocode.}
At each round, $H$-CLIP-OGD draws $A_t \sim p_t$ and observes the next within-stratum value, updating only that stratum's sample variance. The action is drawn at random from $p_t$ rather than by $\arg\max$; this is the same design choice that motivates posterior sampling over plug-in greedy in Section~\ref{sec:tsneyman}, namely to avoid the lock-in failure mode in which a stratum is permanently selected or excluded on the basis of a noisy early variance estimate. The full procedure is summarized in Algorithm~\ref{alg:hclipogd}.

\begin{algorithm}[h]
\caption{$H$-CLIP-OGD}
\label{alg:hclipogd}
\begin{algorithmic}[1]
\Require Stratum sizes $\{N_h\}_{h=1}^H$, total budget $n$, warm-start $m_0 \geq 2$, hyperparameters $(\{\eta_t\}, \beta, c_\varepsilon)$.
\For{$h = 1, \ldots, H$}
\State Collect $m_0$ observations from stratum $h$ and initialize $n_h \leftarrow m_0$.
\EndFor
\State Initialize $q_{Hm_0 + 1} \leftarrow \mathbf{1} / H$.
\For{$t = H m_0 + 1, \ldots, n$}
\State Set $\varepsilon_t = \min\{1/(2H),\, c_\varepsilon\, t^{-\beta}\}$ and form $p_t = (1 - H \varepsilon_t) q_t + \varepsilon_t \mathbf{1}$.
\State For each $h$, set $\widehat{g}_t^{(p)}(h) = -(N_h/N)^2 s_h^2(t) / p_t(h)^2$.
\State Draw $A_t \sim p_t$.
\State Sample one additional unit from stratum $A_t$, update $s_{A_t}^2$, and increment $n_{A_t}$.
\State $q_{t+1} \leftarrow \Pi_{\Delta_H}\!\left( q_t - \eta_t (1 - H \varepsilon_t)\, \widehat{g}_t^{(p)} \right)$.
\EndFor 
\State 
\Return Final allocation $(n_1, \ldots, n_H)$ and the stratified sample.
\end{algorithmic}
\end{algorithm}

\paragraph{Hyperparameters.}
The experiments use values of $c_\varepsilon$, $\beta$, and a step-size schedule $\eta_t$ recorded in the reproducibility code. A workable default for fixed-budget runs is $\eta_t = \eta = c_\eta\, n^{-1/2}$, constant in $t$ but scaled to the horizon $n$, since the gradient bound $\|\widehat{g}_t^{(p)}\|_\infty \lesssim \max_h N_h^2 s_h^2(t) / \varepsilon_t^2$ depends on the clipping level.

\paragraph{Specialization to $H = 2$.}
For $H = 2$, parametrize $p = (p, 1 - p)$. The mixture $p_t$ lies in the clipped interval $[\varepsilon_t, 1 - \varepsilon_t]$, the objective reduces to
\[
F(p; s^2) \;=\; \frac{(N_1/N)^2 s_1^2}{p} + \frac{(N_2/N)^2 s_2^2}{1 - p},
\]
and Algorithm~\ref{alg:hclipogd} becomes clipped projected stochastic gradient descent on this scalar function with a plug-in gradient. This is structurally similar to CLIP-OGD~\citep{DaiGraduHarshaw2023ClipOGD}, with two principal differences: the action space here is two strata in a sequential survey rather than treatment and control in a sequential experiment, and the gradient estimator is plug-in rather than the inverse-probability-weighted estimator that gives CLIP-OGD its unbiasedness on each round.

\section{Proofs}
\label{app:proofs}

Throughout this appendix, $H\ge2$ is fixed and finite.
Write $[H]=\{1,\dots,H\}$, $w_h=N_h\sigma_h$, $W=\sum_{g=1}^H w_g$, and $p_h=w_h/W$.
After initialization, each stratum has received $m_0\ge2$ observations and the initial total sample size is $t_0:=Hm_0$.
For $t\ge t_0$, let $n_h(t)$ be the number of observations collected from stratum $h$ by time $t$, so that $\sum_h n_h(t)=t$.
Define
\[
x_h(t):=\frac{n_h(t)}{t},
\qquad
r_h(t):=\frac{x_h(t)}{p_h}.
\]
The vector $r(t)$ has $p$-weighted mean one: $\sum_h p_h r_h(t)=\sum_h x_h(t)=1$.
Let $(\mathcal F_t)_{t\ge t_0}$ be the natural filtration generated by the initial observations and all subsequent allocation decisions and sampled responses.
The predictable allocation probabilities are
\[
\pi_{t+1,h}:=\Prob(A_{t+1}=h\mid \mathcal F_t),
\qquad h\in[H],
\]
and the compensators are
\[
\Lambda_h(t):=m_0+\sum_{s=t_0}^{t-1}\pi_{s+1,h}.
\]
Define also
\[
R_h(t):=\sum_{j=1}^{n_h(t)}Y_{hj},
\qquad
\bar Y_h(t):=\frac{R_h(t)}{n_h(t)}.
\]

\subsection{Basic martingale decompositions}

\begin{proposition}[Basic martingale decompositions]
\label{prop:martingale}
For each $h\in[H]$,
\[
M_h^N(t):=n_h(t)-\Lambda_h(t),
\qquad
M_h^Y(t):=R_h(t)-\mu_h n_h(t)
\]
are square-integrable $(\mathcal F_t)$-martingales. Moreover,
\[
\langle M_h^N\rangle(t)=\sum_{s=t_0}^{t-1}\pi_{s+1,h}(1-\pi_{s+1,h}),
\qquad
\langle M_h^Y\rangle(t)=\sigma_h^2\sum_{s=t_0}^{t-1}\pi_{s+1,h}.
\]
\end{proposition}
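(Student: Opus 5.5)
We will prove both claims by writing each process as a sum of martingale differences and computing the one-step conditional first and second moments.

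\emph{Increments.} For $t\ge t_0$, set $\xi_{t+1,h}:=\1\{A_{t+1}=h\}$. Then
\[
M_h^N(t+1)-M_h^N(t)=\xi_{t+1,h}-\pi_{t+1,h}.
\]
Since $\xi_{t+1,h}$ is bounded and $\E[\xi_{t+1,h}\mid\mathcal F_t]=\pi_{t+1,h}$, this increment has conditional mean zero. Its conditional variance is that of a Bernoulli variable, $\pi_{t+1,h}(1-\pi_{t+1,h})$. Summing from $t_0$ gives the martingale property and the stated bracket. The initial values are $M_h^N(t_0)=m_0-m_0=0$, and $|M_h^N(t)|\le 2t$, so the process is square-integrable.

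For $M_h^Y$, the increment is $\xi_{t+1,h}\,(Y_{h,n_h(t)+1}-\mu_h)$.

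\emph{Key step.} Conditionally on $\mathcal F_t$, the next unread observation $Y_{h,n_h(t)+1}$ is independent of $A_{t+1}$ and has law $Y_{h1}$. To justify this, we formalize the model as follows. Put all $H$ i.i.d.\ streams and an independent randomization source $U$ (generating the posterior draws and tie-breaks) on one probability space. Then $A_{t+1}$ is a measurable function of $\mathcal F_t$ and $U_{t+1}$, and $\mathcal F_t$ is generated by $U_1,\dots,U_t$ together with the first $n_g(t)$ entries of each stream.

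The index $n_h(t)+1$ is a predictable, stopping-time-type index. A standard argument handles this: decompose on the events $\{n_h(t)=k\}$. On such an event, $\mathcal F_t$ restricted to it is contained in $\sigma(U_{1:t+1},\,Y_{g,\cdot}\text{ for } g\neq h,\,Y_{h1},\dots,Y_{hk})$, and $Y_{h,k+1}$ is independent of this $\sigma$-field by the i.i.d.\ assumption. Hence
\[
\E[\xi_{t+1,h}(Y_{h,n_h(t)+1}-\mu_h)\mid\mathcal F_t]=\pi_{t+1,h}\cdot 0=0,
\]
and, using $\xi^2=\xi$,
\[
\E[\xi_{t+1,h}(Y-\mu_h)^2\mid\mathcal F_t]=\sigma_h^2\,\pi_{t+1,h}.
\]
This yields the martingale property and $\langle M_h^Y\rangle(t)=\sigma_h^2\sum_{s=t_0}^{t-1}\pi_{s+1,h}$.

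We define the bracket via post-initialization increments, so the initial term $M_h^Y(t_0)=\sum_{j\le m_0}(Y_{hj}-\mu_h)$ is an $\mathcal F_{t_0}$-measurable, square-integrable starting value. It does not enter the displayed compensator, which matches the stated formula.

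\emph{Square-integrability.} $\E[M_h^Y(t)^2]\le \sigma_h^2\,(m_0+t)<\infty$ by orthogonality of the increments.

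\emph{Main obstacle.} The only delicate point is the independence-at-a-predictable-index step. We will handle it cleanly through the event decomposition above: write
\[
\E\bigl[g(Y_{h,n_h(t)+1})\,\1_B\bigr]=\sum_k \E\bigl[g(Y_{h,k+1})\,\1_{B\cap\{n_h(t)=k\}}\bigr]
\]
for $B\in\mathcal F_t$, and use that $B\cap\{n_h(t)=k\}$ lies in a $\sigma$-field independent of $Y_{h,k+1}$. This identity must also be checked jointly with the fresh randomization $U_{t+1}$, which drives $\xi_{t+1,h}$; since $U_{t+1}$ is independent of everything else, the same decomposition goes through with $U_{t+1}$ added to the conditioning $\sigma$-field.
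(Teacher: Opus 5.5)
Your proposal is correct and follows the same route as the paper. Both arguments write down the one-step increments $\1\{A_{t+1}=h\}-\pi_{t+1,h}$ and $\1\{A_{t+1}=h\}(Y_{h,n_h(t)+1}-\mu_h)$, show that they have zero conditional mean given $\mathcal F_t$, and read off the brackets from their conditional second moments. Your decomposition over $\{n_h(t)=k\}$, with the fresh randomization $U_{t+1}$ adjoined, makes rigorous the paper's one-line claim that the next unused draw is independent of $\mathcal F_t$ and of the TS randomization, and you also check square-integrability and the initial-value convention explicitly.
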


\begin{proof}
For $M_h^N$,
\[
M_h^N(t+1)-M_h^N(t)=\1\{A_{t+1}=h\}-\pi_{t+1,h},
\]
whose conditional expectation given $\mathcal F_t$ is zero.
For $M_h^Y$,
\[
M_h^Y(t+1)-M_h^Y(t)
=
\bigl(Y_{h,n_h(t)+1}-\mu_h\bigr)\1\{A_{t+1}=h\}.
\]
Conditional on $\mathcal F_t$, the next unused within-stratum draw has mean $\mu_h$ and is independent of the TS randomization, so the conditional expectation is zero.
The predictable quadratic-variation formulas follow by taking conditional second moments of the increments.
\end{proof}

\subsection{Posterior concentration along infinitely sampled strata}

\begin{lemma}[Working-posterior concentration]
\label{lem:posterior-concentration}
Fix $h\in[H]$.
On any event on which $n_h(t)\to\infty$, one has
\[
s_h^2(t)\toas \sigma_h^2,
\qquad
\tau_h^2(t)\toas \sigma_h^2,
\qquad
\nu_h(t)\to\infty,
\]
where
\[
\nu_h(t)=\nu_0+n_h(t)-1,
\qquad
\tau_h^2(t)=\frac{\nu_0s_0^2+(n_h(t)-1)s_h^2(t)}{\nu_0+n_h(t)-1}.
\]
Moreover, for every $\delta>0$,
\[
\Prob\bigl(|\wt\sigma_h^2(t)-\sigma_h^2|>\delta\mid\mathcal F_t\bigr)\toas0.
\]
\end{lemma}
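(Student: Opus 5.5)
The key observation is that adaptivity does not change the within-stratum sequence. Under Assumption~\ref{ass:fixed-h}, stratum $h$ supplies an iid stream $Y_{h1},Y_{h2},\dots$, and the algorithm only decides \emph{how many} terms of that stream have been revealed at time $t$. So $s_h^2(t)=\widehat S_{h,n_h(t)}$, where
\[
\widehat S_{h,m}:=\frac{1}{m-1}\sum_{j=1}^m (Y_{hj}-\bar Y_{h,m})^2
\]
is the ordinary sample variance of the first $m$ stream elements. By the strong law of large numbers applied to $Y_{hj}$ and $Y_{hj}^2$ (finite second moments suffice), $\widehat S_{h,m}\to\sigma_h^2$ almost surely as $m\to\infty$. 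This convergence is on an event of probability one that does not involve the allocation at all. Intersecting it with $\{n_h(t)\to\infty\}$ gives $s_h^2(t)\to\sigma_h^2$ on that event, since $n_h(t)$ is a nondecreasing integer sequence diverging to infinity. I would state this carefully as a subsequence-of-a-deterministic-sequence argument, so that no martingale SLLN is needed.

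The statements about $\nu_h$ and $\tau_h^2$ are then immediate.
\begin{itemize}
\item $\nu_h(t)=\nu_0+n_h(t)-1\to\infty$ trivially on the event.
\item $\tau_h^2(t)$ is a convex combination of $s_0^2$ and $s_h^2(t)$, with weight $\nu_0/\nu_h(t)\to0$ on $s_0^2$. Hence $\tau_h^2(t)\to\sigma_h^2$.
\end{itemize}

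For the conditional statement, I would use the explicit representation of the draw. Conditional on $\mathcal F_t$, the draw is $\wt\sigma_h^2(t)=\nu_h(t)\tau_h^2(t)/\chi^2_{\nu_h(t)}$, with the chi-square variable independent of $\mathcal F_t$ and $\nu_h(t),\tau_h^2(t)$ being $\mathcal F_t$-measurable. Write $Z_\nu:=\chi^2_\nu/\nu$, so that $\wt\sigma_h^2(t)=\tau_h^2(t)/Z_{\nu_h(t)}$. Chebyshev's inequality gives $\Prob(|Z_\nu-1|>\eta)\le 2/(\nu\eta^2)$. Fix $\delta>0$ and a realization in the event with $\tau_h^2(t)\to\sigma_h^2$ and $\nu_h(t)\to\infty$. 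For $t$ large, $|\tau_h^2(t)-\sigma_h^2|<\delta/2$. Choose $\eta>0$, depending on $\delta$ and $\sigma_h^2$, such that $|z-1|\le\eta$ and $|\tau-\sigma_h^2|<\delta/2$ together imply $|\tau/z-\sigma_h^2|\le\delta$; for instance take $\eta$ with $(\sigma_h^2+\delta/2)\eta/(1-\eta)<\delta/2$. Then
\[
\Prob\bigl(|\wt\sigma_h^2(t)-\sigma_h^2|>\delta\mid\mathcal F_t\bigr)\le \frac{2}{\nu_h(t)\eta^2}\to0
\]
along that realization, which proves the claim almost surely on the event.

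The only delicate step is making precise that the within-stratum data are an unaltered prefix of the iid stream, independent of the Thompson-sampling randomization. In the superpopulation model this holds by construction, and Proposition~\ref{prop:martingale} already uses the same fact. With that in place, everything else is routine.
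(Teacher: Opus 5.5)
Your proposal is correct and follows essentially the same route as the paper: strong consistency of the sample variance along the revealed prefix of the iid stream, the convex-combination form of $\tau_h^2(t)$, and the representation $\wt\sigma_h^2(t)=\tau_h^2(t)\nu_h(t)/\chi^2_{\nu_h(t)}$, split into eventual closeness of $\tau_h^2(t)$ to $\sigma_h^2$ and concentration of $\chi^2_{\nu}/\nu$. Your Chebyshev bound $2/(\nu_h(t)\eta^2)$ and your explicit prefix-of-the-stream argument spell out quantitatively what the paper states only qualitatively.
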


\begin{proof}
On $\{n_h(t)\to\infty\}$, strong consistency of the sample variance gives $s_h^2(t)\toas\sigma_h^2$ because $\E[(Y_{h1}-\mu_h)^2]=\sigma_h^2<\infty$.
The displayed formula for $\tau_h^2(t)$ then yields $\tau_h^2(t)\toas\sigma_h^2$, while $\nu_h(t)\to\infty$.
Conditional on $\mathcal F_t$,
\[
\wt\sigma_h^2(t)=\tau_h^2(t)\frac{\nu_h(t)}{Z_t},
\qquad
Z_t\mid\mathcal F_t\sim\chi^2_{\nu_h(t)}.
\]
Fix $\delta>0$ and choose $\eta\in(0,1/2)$ so small that $2\eta(\sigma_h^2+1)+\eta<\delta$.
On the event
\[
\Bigl\{|\tau_h^2(t)-\sigma_h^2|\le\eta\Bigr\}
\cap
\Bigl\{\bigl|Z_t/\nu_h(t)-1\bigr|\le\eta\Bigr\},
\]
one has $\nu_h(t)/Z_t\le(1-\eta)^{-1}\le2$, and hence
\[
\bigl|\wt\sigma_h^2(t)-\sigma_h^2\bigr|
\le
\left|\tau_h^2(t)-\sigma_h^2\right|\frac{\nu_h(t)}{Z_t}
+\sigma_h^2\left|\frac{\nu_h(t)}{Z_t}-1\right|
<\delta.
\]
Thus
\[
\Prob\bigl(|\wt\sigma_h^2(t)-\sigma_h^2|>\delta\mid\mathcal F_t\bigr)
\le
\1\{|\tau_h^2(t)-\sigma_h^2|>\eta\}
+
\Prob\left(\left|\frac{\chi^2_{\nu_h(t)}}{\nu_h(t)}-1\right|>\eta\right),
\]
and the right-hand side converges almost surely to zero.
\end{proof}

\subsection{Persistent exploration}

\begin{proposition}[Persistent exploration]
\label{prop:infinite-sampling}
Under Assumption~\ref{ass:fixed-h},
\[
n_h(t)\toas\infty,
\qquad h=1,\dots,H.
\]
\end{proposition}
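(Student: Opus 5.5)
We argue by contradiction on the event where some stratum is sampled only finitely often, and show that this event has probability zero via a conditional Borel--Cantelli (Lévy's extension) argument. Let $B=\{h: n_h(t)\not\to\infty\}$ (random). Since $\sum_h n_h(t)=t\to\infty$, the set $B^c$ is nonempty. By Lévy's conditional Borel--Cantelli lemma, $\{A_{t+1}=h \text{ i.o.}\}=\{\sum_t \pi_{t+1,h}=\infty\}$ a.s. It therefore suffices to show that on $\{h\in B\}$ we have $\sum_t\pi_{t+1,h}=\infty$ a.s. This contradicts finiteness of $n_h$ and gives $\Prob(h\in B)=0$ for every $h$.

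\textbf{Step 1: the frozen posterior of $h$.} Work on the event $E_{h,m,T}=\{n_h(t)=m \text{ for all } t\ge T\}$; the union over $m,T$ covers $\{h\in B\}$. On this event, $\nu_h(t)\equiv\nu:=\nu_0+m-1\ge \nu_0+1>0$ and $\tau_h^2(t)\equiv\tau^2>0$, since $\nu_0s_0^2>0$ (or $s_h^2>0$ a.s. when $\nu_0 s_0^2=0$, which holds because $\sigma_h^2>0$). Hence $\wt\Delta_h(t)=N_h^2\tau^2\nu/[m(m+1)Z_t]$ with $Z_t\sim\chi^2_\nu$. So for any $\epsilon>0$, $\Prob(\wt\Delta_h(t)>\epsilon\mid\mathcal F_t)=c(\epsilon)>0$, a constant not depending on $t$ because the $\chi^2_\nu$ law has unbounded support near $0$.

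\textbf{Step 2: competitors' indices vanish.} For $g\in B^c$, $n_g(t)\to\infty$, so Lemma~\ref{lem:posterior-concentration} gives $\Prob(\wt\sigma_g^2(t)>2\sigma_g^2\mid\mathcal F_t)\to0$ a.s. On $\{\wt\sigma_g^2(t)\le 2\sigma_g^2\}$ we have $\wt\Delta_g(t)\le 2N_g^2\sigma_g^2/n_g(t)^2\to0$. For other $g\in B$ (also frozen after some time), the draw has a fixed nondegenerate law, so it does not vanish; these strata must be handled jointly. I would take $h$ to range over all of $B$ and consider the event that \emph{some} element of $B$ is selected. Using conditional independence of the draws across strata, I need to show
\[
\Prob\bigl(\max_{g\in B}\wt\Delta_g(t)>\max_{g\in B^c}\wt\Delta_g(t)\mid\mathcal F_t\bigr)\ge \tfrac12 c>0
\]
for all large $t$. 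This follows by taking $\epsilon$ with $\Prob(\max_{g\in B^c}\wt\Delta_g(t)\le\epsilon\mid\mathcal F_t)\to1$ and a fixed $h\in B$ with probability $c(\epsilon)$ of exceeding $\epsilon$.

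\textbf{Step 3: conclude.} Summing, $\sum_t\Prob(A_{t+1}\in B\mid\mathcal F_t)=\infty$ on $\{B\neq\emptyset\}$. By Lévy's extension, $B$ is then selected infinitely often, so some member of $B$ has $n_g\to\infty$, contradicting the definition of $B$. Hence $\Prob(B\ne\emptyset)=0$.

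The main obstacle is the measurability bookkeeping: $B$ is random, so I would fix a deterministic nonempty proper subset $S\subset[H]$ and constants $(m_g)_{g\in S}$ and $T$. On the event $\{n_g(t)=m_g\ \forall t\ge T, g\in S;\ n_g\to\infty, g\notin S\}$, I would run Steps 1--3 with deterministic $\epsilon$ and a lower bound $c$ that is also deterministic given $m_g$. Since $\tau_g^2$ is random but $\mathcal F_T$-measurable, I would use the conditional form of Lévy's lemma, or intersect further with $\{\tau_g^2\ge\kappa\}$ for rational $\kappa>0$. The union over countably many such events then finishes the proof.
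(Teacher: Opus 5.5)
Your proof is correct and is essentially the paper's argument. On the event that a fixed set $S$ of strata is sampled only finitely often, the frozen inverse-$\chi^2$ draws for $S$ have full support, the indices of the infinitely sampled strata vanish by posterior concentration, and conditional Borel--Cantelli gives the contradiction. Your one change, showing that $S$ as a whole is selected infinitely often and then using finiteness of $S$, spares you the paper's joint event that forces a particular $h\in S$ to beat the other frozen strata.

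One parenthetical claim is false, though it does not affect your main line. $\sigma_h^2>0$ does not give $s_h^2>0$ almost surely for discrete outcomes (e.g.\ binary errors with $m_0=2$). So when $\nu_0 s_0^2=0$ the frozen posterior can be a point mass at $0$, and the proposition can genuinely fail in that case. The proper $\mathrm{Inv}\text{-}\chi^2(\nu_0,s_0^2)$ prior has $\nu_0>0$ and $s_0^2>0$, which is all your argument actually needs.
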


\begin{proof}
Fix $h\in[H]$ and suppose, toward a contradiction, that $n_h(t)$ is bounded with positive probability.
It is enough to work on an event $E_S$ of positive probability on which the set $S\subset[H]$ of strata sampled only finitely often is fixed and contains $h$.
Because counts are nondecreasing integers, on $E_S$ there is an almost surely finite random time $T_S$ after which $n_g(t)$ is constant for every $g\in S$, while $n_g(t)\to\infty$ for every $g\notin S$.

For $g\in S$, the working-posterior law of $\wt\Delta_g(t)$ is frozen after $T_S$ and has support $(0,\infty)$.
The posterior draws are conditionally independent across strata, so there exist an $\mathcal F_{T_S}$-measurable number $b>0$ and an $\mathcal F_{T_S}$-measurable number $\eta>0$ such that, for all $t\ge T_S$,
\[
\Prob\left(
\wt\Delta_h(t)>b\quad\text{and}\quad \wt\Delta_g(t)<b\ \forall g\in S\setminus\{h\}
\,\middle|\,\mathcal F_t
\right)\ge\eta
\qquad\text{on }E_S.
\]
For $g\notin S$, Lemma~\ref{lem:posterior-concentration} and $n_g(t)\to\infty$ imply $\wt\Delta_g(t)\to0$ in conditional probability, because the denominator is of order $n_g(t)^2$ while the posterior variance draw is tight.
Hence, on $E_S$, for all sufficiently large $t$,
\[
\Prob\left(\max_{g\notin S}\wt\Delta_g(t)>b\,\middle|\,\mathcal F_t\right)\le\eta/2.
\]
Consequently, for all sufficiently large $t$ on $E_S$,
\[
\Prob(A_{t+1}=h\mid\mathcal F_t)\ge\eta/2.
\]
The conditional Borel--Cantelli lemma \citep[Theorem~5.3.2]{Durrett2019} then implies that $A_{t+1}=h$ occurs infinitely often on $E_S$, contradicting $h\in S$.
Thus no stratum can be sampled only finitely often.
\end{proof}

\subsection{A fixed-\texorpdfstring{$H$}{H} ranking lemma}

\begin{lemma}[Selection of under-allocated strata]
\label{lem:underallocated-selection}
Let
\[
Q_t:=\sum_{h=1}^H p_h\{r_h(t)-1\}^2.
\]
For every $\varepsilon>0$, there exist constants $a_\varepsilon>0$, $\gamma_\varepsilon>0$, a deterministic integer $T_\varepsilon$, and an $\mathcal F_t$-measurable sequence $\rho_t^{(\varepsilon)}\ge0$ with $\rho_t^{(\varepsilon)}\toas0$ such that for all $t\ge T_\varepsilon$, on the event
\[
\{Q_t\ge\varepsilon\}\cap\{\rho_t^{(\varepsilon)}\le\gamma_\varepsilon\},
\]
one has
\[
\sum_{h=1}^H \pi_{t+1,h}r_h(t)\le1-a_\varepsilon.
\]
\end{lemma}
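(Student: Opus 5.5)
The plan is to rewrite the TS index in terms of the ratios $r_h(t)$, control the posterior draws through a single concentration quantity $\rho_t^{(\varepsilon)}$, and then show that when $Q_t\ge\varepsilon$ the selection concentrates on strata with $r_h(t)$ bounded below one.

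\emph{Index in ratio form.} Since $n_h(t)=t\,p_h r_h(t)$ and $N_h^2\sigma_h^2=W^2p_h^2$,
\[
\wt\Delta_h(t)=\frac{N_h^2\wt\sigma_h^2(t)}{n_h(t)\{n_h(t)+1\}}
=\frac{W^2}{t^2}\cdot\frac{\wt\sigma_h^2(t)}{\sigma_h^2}\cdot\frac{1}{r_h(t)\{r_h(t)+1/(tp_h)\}}.
\]
After removing the common factor $W^2/t^2$, the ranking is driven by $1/r_h(t)^2$. It is perturbed by the multiplicative factor $\wt\sigma_h^2/\sigma_h^2$ and by the additive $1/(tp_h)$. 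The deterministic $T_\varepsilon$ will make the additive term negligible.

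\emph{Concentration quantity.} Fix $\kappa\in(0,1)$ to be chosen later, and define
\[
\rho_t^{(\varepsilon)}:=\max_h\Prob\bigl(|\wt\sigma_h^2(t)/\sigma_h^2-1|>\kappa\mid\mathcal F_t\bigr)+\max_h|\tau_h^2(t)/\sigma_h^2-1|.
\]
This is $\mathcal F_t$-measurable. By Proposition~\ref{prop:infinite-sampling}, every $n_h(t)\to\infty$ a.s., so Lemma~\ref{lem:posterior-concentration} gives $\rho_t^{(\varepsilon)}\toas0$.

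\emph{Geometry of $Q_t\ge\varepsilon$.} The ratios satisfy $\sum_hp_hr_h=1$ and $r_h\le1/p_h$. If $Q_t\ge\varepsilon$, there is some $h$ with $r_h\ne1$, and then some stratum must have $r_h\le1-c_\varepsilon$. To see this, suppose all $r_h\ge1-c$. Then $\sum_hp_h(r_h-1)_+=\sum_hp_h(1-r_h)_+\le c$, and the positive parts are bounded by $1/p_{\min}$, so $Q_t\le c(1/p_{\min}+1)$. Choosing $c_\varepsilon=\varepsilon/(2(1/p_{\min}+1))$ gives the contradiction.

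\emph{Where the selection lands.} Let $r_{\min}(t)=\min_hr_h(t)\le1-c_\varepsilon$. Define the "bad" strata as those with $r_h>r_{\min}(1+\kappa)^2(1+\text{small})$. A bad stratum can win the argmax only if one of the two draws, for the minimizer or for the bad stratum, deviates by more than $\kappa$. By conditional independence and a union bound, the bad strata carry total probability at most $H\rho_t$. Every non-bad stratum has $r_h\le(1-c_\varepsilon)(1+\kappa)^2(1+1/(t p_{\min}))$. Pick $\kappa$ small and $T_\varepsilon$ large so that this bound is at most $1-c_\varepsilon/2$.

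\emph{Combining.} Splitting $\sum_h\pi_{t+1,h}r_h$ into non-bad and bad strata gives
\[
\sum_h\pi_{t+1,h}r_h\le(1-c_\varepsilon/2)+H\rho_t/p_{\min}.
\]
Taking $\gamma_\varepsilon=c_\varepsilon p_{\min}/(4H)$ yields $a_\varepsilon=c_\varepsilon/4$.

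The main obstacle is the additive $1/(tp_h)$ term when $r_{\min}$ is tiny. Here the ratio comparison must be written as $r_h(r_h+\delta_t)$ versus $r_{\min}(r_{\min}+\delta_t)$ rather than through squares. Monotonicity of $x\mapsto x(x+\delta)$ handles this. Since $r_h\ge m_0/(tp_h)$, the perturbation is at most a bounded constant factor, which still preserves the bound after enlarging $T_\varepsilon$ and shrinking $\kappa$.
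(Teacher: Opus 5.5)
Your overall route is the paper's. When $Q_t\ge\varepsilon$ some stratum has $r_h(t)\le 1-c_\varepsilon$; your explicit $c_\varepsilon=\varepsilon/\{2(1/p_{\min}+1)\}$ is a nice replacement for the compactness argument. Once all draws are accurate, the argmax should lie in $\{h: r_h(t)\le 1-c_\varepsilon/2\}$, and the remaining mass is charged to $\rho_t^{(\varepsilon)}$ using $r_h\le 1/p_{\min}$, giving $a_\varepsilon=c_\varepsilon/4$ as in the paper.

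The step that fails as written is the ranking step with a \emph{relative} bad set $\{h: r_h>r_{\min}(1+\kappa)^2(1+\text{small})\}$. Since $r_h\{r_h+1/(tp_h)\}=r_h^2\{1+1/n_h(t)\}$, the additive term is really a factor $1+1/n_h(t)$. This factor can be $1+1/m_0$ for the minimizer and close to $1$ for a competitor. For instance, take $H=3$, $p_g=10p_{h^*}$, $n_{h^*}(t)=2$, $n_g(t)=22$, and all other observations in the third stratum. Then $r_g=1.1\,r_{h^*}$, yet $r_g^2(1+1/22)<r_{h^*}^2(1+1/2)$. So $g$ has the larger index even when both draws are exact, although $g$ is ``bad'' for small $\kappa$ and large $t$. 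Hence the implication ``a bad stratum can win only if a draw deviates by more than $\kappa$'' is false as a deterministic statement. Your last sentence does not fix it: $1+1/m_0$ is bounded but not close to one, and neither enlarging $T_\varepsilon$ nor shrinking $\kappa$ removes it. On $\{\rho_t^{(\varepsilon)}\le\gamma_\varepsilon\}$ such configurations are in fact excluded for small $\kappa$, because an inverse-$\chi^2$ draw with few degrees of freedom cannot have relative error below $\kappa$ with high probability. That, however, is an additional argument you have not made.

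The clean repair, which is what the paper does, is to make the bad set absolute, $\{g: r_g(t)>1-c_\varepsilon/2\}$, and compare each such $g$ with a minimizer $h^*$. Dropping the additive term gives $r_g\{r_g+1/(tp_g)\}\ge(1-c_\varepsilon/2)^2$. For the minimizer, monotonicity of $x\mapsto x(x+\delta)$ together with $r_{h^*}\le 1-c_\varepsilon$ gives $r_{h^*}\{r_{h^*}+1/(tp_{h^*})\}\le(1-c_\varepsilon)\{1-c_\varepsilon+1/(tp_{\min})\}$. The ratio of these bounds tends to $(1-c_\varepsilon/2)^2/(1-c_\varepsilon)^2>1$ uniformly over allocations. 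So for $\kappa$ small and $t\ge T_\varepsilon$ it exceeds $(1+\kappa)/(1-\kappa)$, and no bad stratum can win when all draws have relative error at most $\kappa$. With this change the tiny-$r_{\min}$ case is harmless, no lower bound $r_h\ge m_0/(tp_h)$ is needed, and the rest of your argument goes through unchanged.
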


\begin{proof}
Let $B:=\max_h 1/p_h$, so $0\le r_h(t)\le B$ for all $h,t$.
By compactness of the set $\{r\in[0,B]^H:\sum_h p_hr_h=1,\sum_h p_h(r_h-1)^2\ge\varepsilon\}$, there exists $c_\varepsilon>0$ such that every $r$ in this set satisfies
\[
\min_h r_h\le1-c_\varepsilon.
\]
Define the under-allocated set
\[
U_t:=\{h:r_h(t)\le1-c_\varepsilon/2\}.
\]
On $\{Q_t\ge\varepsilon\}$, this set is nonempty.

Consider the true, nonrandom marginal-gain indices
\[
D_h(t):=\frac{w_h^2}{n_h(t)\{n_h(t)+1\}}.
\]
Since $w_h=Wp_h$ and $n_h(t)=tp_hr_h(t)$,
\[
D_h(t)=\frac{W^2}{t^2}\frac{1}{r_h(t)^2+r_h(t)/(tp_h)}.
\]
Thus, for all sufficiently large $t$, any stratum with $r_h(t)\le1-c_\varepsilon$ has a strictly larger true index than any stratum with $r_g(t)>1-c_\varepsilon/2$, uniformly over all feasible allocation vectors.
Therefore there are constants $\delta_\varepsilon>0$ and a deterministic integer $T_\varepsilon$ such that, for all $t\ge T_\varepsilon$, if all posterior variance draws satisfy
\[
|\wt\sigma_h^2(t)-\sigma_h^2|\le\delta_\varepsilon,
\qquad h=1,\dots,H,
\]
then the TS--Neyman maximizer belongs to $U_t$ whenever $Q_t\ge\varepsilon$.

Set
\[
\rho_t^{(\varepsilon)}
:=
\sum_{h=1}^H
\Prob\bigl(|\wt\sigma_h^2(t)-\sigma_h^2|>\delta_\varepsilon\mid\mathcal F_t\bigr).
\]
By Proposition~\ref{prop:infinite-sampling} and Lemma~\ref{lem:posterior-concentration}, $\rho_t^{(\varepsilon)}\toas0$.
On $\{Q_t\ge\varepsilon\}$, the preceding paragraph and a union bound imply
\[
\Prob(A_{t+1}\in U_t\mid\mathcal F_t)\ge1-\rho_t^{(\varepsilon)}.
\]
Since $r_h(t)\le1-c_\varepsilon/2$ for $h\in U_t$ and $r_h(t)\le B$ always,
\[
\sum_{h=1}^H\pi_{t+1,h}r_h(t)
\le
1-c_\varepsilon/2+B\rho_t^{(\varepsilon)}.
\]
The claim follows by taking $\gamma_\varepsilon=c_\varepsilon/(4B)$ and $a_\varepsilon=c_\varepsilon/4$.
\end{proof}

\subsection{Robbins--Siegmund lemma}

\begin{lemma}[Robbins--Siegmund almost-supermartingale theorem]
\label{lem:rs}
Let $(Z_t)$, $(\beta_t)$, $(\xi_t)$, and $(\zeta_t)$ be nonnegative, $\mathcal F_t$-adapted processes such that, for every $t\ge t_0$,
\[
\E[Z_{t+1}\mid\mathcal F_t]
\le
(1+\beta_t)Z_t+\xi_t-\zeta_t.
\]
If $\sum_t\beta_t<\infty$ and $\sum_t\xi_t<\infty$ almost surely, then $Z_t$ converges almost surely to a finite random variable and $\sum_t\zeta_t<\infty$ almost surely.
\end{lemma}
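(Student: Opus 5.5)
The plan is the classical Robbins--Siegmund reduction: normalize away the multiplicative factor $(1+\beta_t)$, which turns the recursion into an additive almost-supermartingale inequality, and then localize on the summability of the perturbation $\xi_t$. Define the $\mathcal F_{t-1}$-measurable weights
\[
b_t:=\prod_{s=t_0}^{t-1}(1+\beta_s)^{-1},\qquad b_{t_0}=1 .
\]
Since $\beta_s\ge0$ and $\sum_s\beta_s<\infty$ almost surely, $b_t$ is nonincreasing and converges almost surely to a limit $b_\infty\in(0,1]$. Set
\[
Z'_t:=b_tZ_t,\qquad \xi'_t:=b_{t+1}\xi_t,\qquad \zeta'_t:=b_{t+1}\zeta_t .
\]
Because $b_{t+1}$ is $\mathcal F_t$-measurable, multiplying the hypothesis by $b_{t+1}$ gives
\[
\E[Z'_{t+1}\mid\mathcal F_t]\le Z'_t+\xi'_t-\zeta'_t .
\]
Since $b_{t+1}\le1$, we also have $\sum_t\xi'_t\le\sum_t\xi_t<\infty$ almost surely.

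Next, consider
\[
U_t:=Z'_t-\sum_{s=t_0}^{t-1}(\xi'_s-\zeta'_s).
\]
The inequality above says exactly that $\E[U_{t+1}\mid\mathcal F_t]\le U_t$. However, $U_t$ need not be bounded below, so I would localize. For $a>0$, let
\[
\tau_a:=\inf\Bigl\{t:\ \sum_{s=t_0}^{t}\xi'_s>a\Bigr\}.
\]
This is a stopping time because the partial sum up to $t$ is $\mathcal F_t$-measurable. For $t\le\tau_a$ we have $\sum_{s<t}\xi'_s\le a$, and $Z'_t,\zeta'_s\ge0$, so $U_{t\wedge\tau_a}+a\ge0$. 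Hence $U_{t\wedge\tau_a}+a$ is a nonnegative supermartingale, and it converges almost surely to a finite limit. On the event $\{\sum_s\xi'_s\le a\}$ we have $\tau_a=\infty$, so $U_t$ itself converges there. Taking the union over $a\in\mathbb N$ and using $\sum_s\xi'_s<\infty$ almost surely, $U_t$ converges almost surely to a finite limit.

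Now write
\[
Z'_t+\sum_{s<t}\zeta'_s=U_t+\sum_{s<t}\xi'_s .
\]
The right-hand side converges almost surely to a finite limit. On the left, the first term is nonnegative and the second is nondecreasing. A nondecreasing sequence bounded above by a convergent sequence is bounded, so $\sum_s\zeta'_s<\infty$. Then $Z'_t$, being the difference of two convergent sequences, converges to a finite limit. Finally, $Z_t=Z'_t/b_t\to Z'_\infty/b_\infty$ with $b_\infty>0$, and $\sum_t\zeta_t\le b_\infty^{-1}\sum_t\zeta'_t<\infty$ because $b_{t+1}\ge b_\infty$.

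The main obstacle is the technical point that $U_t$ is neither bounded below nor necessarily integrable, which blocks a direct appeal to supermartingale convergence. The stopping time $\tau_a$ handles the lower bound. For integrability, I would either note that in our applications ($Z_t=Q_t\le B^2$ and bounded $\xi_t$) everything is bounded, or use the generalized conditional expectation for nonnegative variables, with a further localization on $\{Z_{t_0}\le a\}$, an $\mathcal F_{t_0}$-event. Both routes keep the argument within the standard nonnegative-supermartingale convergence theorem \citep{Durrett2019}.
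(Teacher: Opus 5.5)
Your proof is correct. The paper gives no proof of this lemma and simply invokes the classical Robbins--Siegmund theorem; your argument is the standard proof of that theorem: normalize by $b_t=\prod_{s<t}(1+\beta_s)^{-1}$, localize at the first time the partial sums of $\xi'_s$ exceed $a$, and apply nonnegative supermartingale convergence. Of your two ways of handling integrability, only the second (generalized conditional expectations for nonnegative variables plus localization on the $\mathcal F_{t_0}$-event $\{Z_{t_0}\le a\}$) proves the lemma at its stated generality, whereas appealing to the boundedness of $Q_t$ covers only the paper's particular use.
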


\subsection{Proof of Theorem~\ref{thm:consistency}}

\begin{proof}
Let
\[
Q_t=\sum_{h=1}^H p_h\{r_h(t)-1\}^2,
\qquad
r_h(t)=\frac{x_h(t)}{p_h}.
\]
Because $x_{t+1}=x_t+(e_{A_{t+1}}-x_t)/(t+1)$, where $e_j$ is the $j$th coordinate vector,
\[
r_h(t+1)
=
r_h(t)+\frac{\1\{A_{t+1}=h\}/p_h-r_h(t)}{t+1}.
\]
A direct expansion gives
\begin{align*}
\E[Q_{t+1}\mid\mathcal F_t]
&=
Q_t+
\frac{2}{t+1}
\left\{
\sum_{h=1}^H\pi_{t+1,h}r_h(t)-1-Q_t
\right\}
+O\!\left(\frac{1}{(t+1)^2}\right),
\end{align*}
where the $O((t+1)^{-2})$ term is deterministic and finite because $H$ is fixed and $p_{\min}:=\min_h p_h>0$.

Fix $\varepsilon>0$ and apply Lemma~\ref{lem:underallocated-selection}.
Let $G_t^{(\varepsilon)}=\{t\ge T_\varepsilon\}\cap\{\rho_t^{(\varepsilon)}\le\gamma_\varepsilon\}$.
Since $\rho_t^{(\varepsilon)}\toas0$, the complement of $G_t^{(\varepsilon)}$ occurs only finitely often almost surely.
Using the trivial bound $\sum_h\pi_{t+1,h}r_h(t)\le B$, the preceding display implies
\[
\E[Q_{t+1}\mid\mathcal F_t]
\le
Q_t+
\xi_t^{(\varepsilon)}-
\zeta_t^{(\varepsilon)},
\]
where $\sum_t\xi_t^{(\varepsilon)}<\infty$ almost surely and
\[
\zeta_t^{(\varepsilon)}
=
\frac{2a_\varepsilon}{t+1}
\1\{Q_t\ge\varepsilon\}\1_{G_t^{(\varepsilon)}}.
\]
By Lemma~\ref{lem:rs}, $Q_t$ converges almost surely and $\sum_t\zeta_t^{(\varepsilon)}<\infty$.
Since $G_t^{(\varepsilon)}$ holds eventually almost surely,
\[
\sum_{t=t_0}^\infty \frac{\1\{Q_t\ge\varepsilon\}}{t+1}<\infty
\qquad\text{a.s.}
\]
If the almost-sure limit of $Q_t$ were at least $\varepsilon$ on an event of positive probability, the last series would diverge on that event.
Therefore $\limsup_t Q_t<\varepsilon$ almost surely.
Taking a countable sequence $\varepsilon\downarrow0$ yields $Q_t\to0$ almost surely.
Since $p_h>0$ for every $h$, this is equivalent to $x_h(t)=n_h(t)/t\to p_h$ for all $h$.
\end{proof}

\subsection{Proof of Theorem~\ref{thm:efficiency}}

\begin{proof}
By Theorem~\ref{thm:consistency}, $n_h(t)=tp_h+o(t)$ almost surely.
Since $p_h=w_h/W$,
\[
\frac{w_h^2}{n_h(t)}
=
\frac{Ww_h}{t}+o\!\left(\frac{1}{t}\right)
\qquad\text{a.s.}
\]
Summing over $h$ gives
\[
V(\mathbf n(t))
=
\frac{W^2}{t}+o\!\left(\frac{1}{t}\right)
\qquad\text{a.s.}
\]
For any feasible integer allocation $\mathbf m$ with $\sum_hm_h=t$, Cauchy--Schwarz gives
\[
W^2=\left(\sum_h w_h\right)^2
\le
\left(\sum_h m_h\right)
\left(\sum_h\frac{w_h^2}{m_h}\right)
=tV(\mathbf m),
\]
so $V^\star(t)\ge W^2/t$.
For the matching upper bound, choose integers $\tilde m_h(t)$ with $\sum_h\tilde m_h(t)=t$ and $|\tilde m_h(t)-tp_h|\le C_H$ for a constant depending only on $H$.
Then
\[
\sum_h\frac{w_h^2}{\tilde m_h(t)}
=
\frac{W^2}{t}+O\!\left(\frac{1}{t^2}\right),
\]
and hence $V^\star(t)=W^2/t+O(t^{-2})$.
The ratio therefore converges to one almost surely.
\end{proof}

\subsection{Proof of Corollary~\ref{cor:clt}}

For clarity, we first state the estimator argument in a conditional form.

\begin{proposition}[Estimator consistency and asymptotic normality for the total]
\label{prop:estimator-total}
Fix any finite $H\ge2$.
Suppose each stratum produces iid observations with mean $\mu_h$ and variance $\sigma_h^2\in(0,\infty)$, and suppose the adaptive allocation satisfies
\[
\frac{n_h(t)}{t}\toas p_h\in(0,1),
\qquad h=1,\dots,H.
\]
Then, with $T=\sum_hN_h\mu_h$ and $\wh T_H(t)=\sum_hN_h\bar Y_h(t)$,
\[
\wh T_H(t)\toas T,
\]
and
\[
\sqrt t\{\wh T_H(t)-T\}
\rightsquig
\mathcal N\!\left(0,\sum_{h=1}^H\frac{N_h^2\sigma_h^2}{p_h}\right).
\]
\end{proposition}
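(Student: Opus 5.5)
The plan is to work stratum by stratum through the martingales $M_h^Y(t)=R_h(t)-\mu_h n_h(t)$ of Proposition~\ref{prop:martingale}, and then combine the strata with a multivariate martingale central limit theorem. Write
\[
\wh T_H(t)-T=\sum_{h=1}^H N_h\{\bar Y_h(t)-\mu_h\}=\sum_{h=1}^H \frac{N_h}{n_h(t)}M_h^Y(t).
\]

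\emph{Consistency.} Since $n_h(t)/t\to p_h>0$, we have $n_h(t)\to\infty$ almost surely. The cleanest route is to note that $\bar Y_h(t)$ is the average of the first $n_h(t)$ terms of the iid stream $Y_{h1},Y_{h2},\dots$. The stream is fixed in advance, and adaptivity only affects how far along it we have read. The ordinary strong law along the stream, composed with the random index $n_h(t)\to\infty$, therefore gives $\bar Y_h(t)\to\mu_h$ almost surely. Alternatively, one can apply the martingale strong law to $M_h^Y$: its compensator $\langle M_h^Y\rangle(t)=\sigma_h^2\Lambda_h(t)$ grows linearly, so $M_h^Y(t)/\Lambda_h(t)\to0$, and $\Lambda_h(t)/n_h(t)\to1$ because $M_h^N(t)=o(t)$. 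Summing over the finitely many strata gives $\wh T_H(t)\to T$.

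\emph{CLT.} Form the vector martingale $\mathbf M(t)=(M_1^Y(t),\dots,M_H^Y(t))$ with increments
\[
\xi_{s+1,h}=(Y_{h,n_h(s)+1}-\mu_h)\1\{A_{s+1}=h\}.
\]
Only one coordinate is nonzero at each step, so the cross-covariations vanish. The predictable covariation is therefore $\mathrm{diag}(\sigma_h^2\Lambda_h(t))$.
\begin{itemize}[leftmargin=1.5em]
\item \textbf{Compensator limit.} I need $\Lambda_h(t)/t\to p_h$. This follows from $n_h(t)/t\to p_h$ together with $M_h^N(t)/t\to0$ almost surely, which is a martingale SLLN with bounded increments. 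Hence $t^{-1}\langle\mathbf M\rangle(t)\to\mathrm{diag}(\sigma_h^2p_h)$ almost surely, a deterministic limit. This is exactly the condition the Hall--Heyde martingale CLT requires, via Cram\'er--Wold applied to $\lambda^\top\mathbf M(t)/\sqrt t$.
\item \textbf{Conditional Lindeberg.} This is the step I expect to be the main obstacle, since only second moments are available. The key observation is that, conditional on $\mathcal F_s$ and $A_{s+1}=h$, the increment is distributed as $Y_{h1}-\mu_h$. The TS draw is independent of the unused stream element, and that element is independent of the past. So
\[
\frac1t\sum_{s<t}\E\bigl[\xi_{s+1,h}^2\1\{|\xi_{s+1,h}|>\epsilon\sqrt t\}\mid\mathcal F_s\bigr]\le \E\bigl[(Y_{h1}-\mu_h)^2\1\{|Y_{h1}-\mu_h|>\epsilon\sqrt t\}\bigr]\to0
\]
by dominated convergence.
\end{itemize}
Together these give $t^{-1/2}\mathbf M(t)\rightsquig\mathcal N(0,\mathrm{diag}(\sigma_h^2p_h))$.

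\emph{Conclusion.} Write
\[
\sqrt t\{\wh T_H(t)-T\}=\sum_h N_h\,\frac{t}{n_h(t)}\,\frac{M_h^Y(t)}{\sqrt t}.
\]
Since $t/n_h(t)\to1/p_h$ almost surely, Slutsky's lemma yields a normal limit with variance
\[
\sum_h N_h^2p_h^{-2}\sigma_h^2p_h=\sum_h N_h^2\sigma_h^2/p_h .
\]

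A subtle point deserves care. A strict martingale-difference structure requires the index $n_h(s)+1$ to be $\mathcal F_s$-measurable. It is, and I would make explicit that the sampled $Y$ value is not yet in $\mathcal F_s$.
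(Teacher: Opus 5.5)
Your proposal is correct and follows essentially the same route as the paper. Both use the stratum martingales $M_h^Y$, obtain compensator convergence from $M_h^N(t)/t\to0$, verify conditional Lindeberg by dominated convergence, invoke the Hall--Heyde martingale CLT, and finish with Slutsky using $t/n_h(t)\to1/p_h$. Your first consistency argument is a slightly more elementary alternative to the paper's martingale strong law: it applies the ordinary strong law along the fixed iid stream and composes it with the random index $n_h(t)\to\infty$. Your second consistency route matches the paper's martingale argument.
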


\begin{proof}
For each $h$, write
\[
M_h^Y(t)=R_h(t)-\mu_hn_h(t).
\]
The increments of $M_h^Y$ have conditional second moments bounded by $\sigma_h^2$, so
\[
\sum_{t=t_0}^\infty
\frac{\E[(M_h^Y(t+1)-M_h^Y(t))^2\mid\mathcal F_t]}{(t+1)^2}<\infty
\qquad\text{a.s.}
\]
The martingale strong law \citep[Chapter~2]{HallHeyde1980} gives $M_h^Y(t)/t\toas0$.
Since $n_h(t)/t\to p_h>0$,
\[
\bar Y_h(t)-\mu_h=\frac{M_h^Y(t)}{n_h(t)}\toas0,
\]
and therefore $\wh T_H(t)\toas T$.

For the CLT, let $\mathbf M^Y(t)=(M_1^Y(t),\dots,M_H^Y(t))^\top$.
Because at most one stratum is sampled at each step, the predictable covariation matrix is diagonal.
We first identify its limit.
By Proposition~\ref{prop:martingale},
\[
M_h^N(t)=n_h(t)-m_0-\sum_{s=t_0}^{t-1}\pi_{s+1,h}
\]
is a martingale with bounded increments, hence $M_h^N(t)/t\toas0$ by the martingale strong law.
Combining this with $n_h(t)/t\to p_h$ gives the compensator convergence
\[
\frac{1}{t}\sum_{s=t_0}^{t-1}\pi_{s+1,h}
=
\frac{n_h(t)}{t}-\frac{m_0}{t}-\frac{M_h^N(t)}{t}
\toas p_h.
\]
Consequently,
\[
\frac{1}{t}\langle\mathbf M^Y\rangle(t)
\toas
\Sigma:=\operatorname{diag}(\sigma_1^2p_1,\dots,\sigma_H^2p_H).
\]

To verify conditional Lindeberg, fix $a=(a_1,\dots,a_H)^\top\in\mathbb R^H$ and $\varepsilon>0$.
Because at most one increment can be nonzero at each step,
\begin{align*}
&(a^\top\Delta\mathbf M^Y(s+1))^2
\1\{|a^\top\Delta\mathbf M^Y(s+1)|>\varepsilon\sqrt t\}\\
&\le
\sum_{h=1}^H a_h^2(\Delta M_h^Y(s+1))^2
\1\{|a_h|\,|\Delta M_h^Y(s+1)|>\varepsilon\sqrt t\}.
\end{align*}
For $a_h\ne0$,
\begin{align*}
&\E\Bigl[a_h^2(\Delta M_h^Y(s+1))^2
\1\{|a_h|\,|\Delta M_h^Y(s+1)|>\varepsilon\sqrt t\}
\,\bigm|\,\mathcal F_s\Bigr]\\
&\qquad=
a_h^2\pi_{s+1,h}
\E\Bigl[(Y_{h1}-\mu_h)^2
\1\{|a_h|\,|Y_{h1}-\mu_h|>\varepsilon\sqrt t\}\Bigr].
\end{align*}
The expectation on the right converges to zero by dominated convergence, since $\E[(Y_{h1}-\mu_h)^2]<\infty$.
Using $(1/t)\sum_s\pi_{s+1,h}\le1$, the conditional Lindeberg average converges to zero almost surely.
The martingale central limit theorem \citep[Chapter~3]{HallHeyde1980} yields
\[
\frac{1}{\sqrt t}\mathbf M^Y(t)
\rightsquig
\mathcal N(0,\Sigma).
\]
Finally,
\[
\sqrt t\{\wh T_H(t)-T\}
=
\sum_{h=1}^H N_h\frac{t}{n_h(t)}\frac{M_h^Y(t)}{\sqrt t}.
\]
Since $t/n_h(t)\toas1/p_h$, Slutsky's theorem gives the displayed normal limit.
\end{proof}

Corollary~\ref{cor:clt} follows from Proposition~\ref{prop:estimator-total}, Theorem~\ref{thm:consistency}, and division by $N$.

\end{document}